\newcommand{\Real}{\ensuremath{\mathbb{R}}} 
\newcommand{\meshM}{\ensuremath{\mathcal{M}}} 
\newcommand{\meshV}{\ensuremath{\mathcal{V}}} 
\newcommand{\meshT}{\ensuremath{\mathcal{T}}} 
\DeclareMathOperator{\aff}{aff}
\DeclareMathOperator{\conv}{conv}
\definecolor{darkgrn}{rgb}{0, 0.75, 0}
\title{Well-Centered Triangulation\footnotemark[5]}
\author{
Evan VanderZee\thanks{Department of Mathematics, 1409 W. Green Street,
University of Illinois at Urbana-Champaign, Urbana, IL 61801
(vanderze@illinois.edu). Research supported by CSE Fellowship from the 
Computational Science and Engineering Program and Applied Mathematics 
Program, University of Illinois and by 
NSF CAREER Award, Grant No. DMS-0645604.}
\and Anil N. Hirani\thanks{Author for Correspondence,  
Department of Computer Science,
201 N. Goodwin Avenue, 
University of Illinois at Urbana-Champaign, Urbana, IL 61801
(hirani@illinois.edu). Research supported by NSF CAREER Award, 
Grant No. DMS-0645604.}
\and Damrong Guoy\thanks{Computational Science and Engineering
Program, Center for Simulation of Advanced Rockets,
University of Illinois at Urbana-Champaign now at
Synopsys Inc., Mountain View, California
(Damrong.Guoy@synopsys.com)}
\and Edgar A. Ramos\thanks{Escuela de Matem\'aticas,
Universidad Nacional de Colombia,
Medell\'in, Colombia
\mbox{(earamosn@unalmed.edu.co)}}
}
\begin{document}

\renewcommand{\thefootnote}{\fnsymbol{footnote}}
\footnotetext[5]{Preliminary results for the 2-dimensional problem
of well-centered planar triangulations appeared previously
in the Proceedings of the 16th International
Meshing Roundtable, Seattle, WA, October 14-17, 2007
\cite{VaHiGuRa2007}.}
\renewcommand{\thefootnote}{\arabic{footnote}}
\maketitle

\begin{abstract}
  Meshes composed of well-centered simplices have nice orthogonal dual
  meshes (the dual Voronoi diagram). This is useful for certain
  numerical algorithms that prefer such primal-dual mesh pairs. We
  prove that well-centered meshes also have optimality properties and
  relationships to Delaunay and minmax angle triangulations. We
  present an iterative algorithm that seeks to transform a given
  triangulation in two or three dimensions into a well-centered one by
  minimizing a cost function and moving the interior vertices while
  keeping the mesh connectivity and boundary vertices fixed. The cost
  function is a direct result of a new characterization of
  well-centeredness in arbitrary dimensions that we present. Ours is
  the first optimization-based heuristic for well-centeredness, and
  the first one that applies in both two and three dimensions. We show
  the results of applying our algorithm to small and large
  two-dimensional meshes, some with a complex boundary, and obtain a
  well-centered tetrahedralization of the cube.  We also show
  numerical evidence that our algorithm preserves gradation and that
  it improves the maximum and minimum angles of acute triangulations
  created by the best known previous method.
\end{abstract}

\begin{keywords}
  well-centered, meshing, mesh optimization, acute, triangulation,
  discrete exterior calculus
\end{keywords}

\begin{AMS}
65N50,
%
65M50,
%
65D18,
%
%
51M04
%
%

\end{AMS}

\pagestyle{myheadings}
\thispagestyle{plain}
\markboth{Evan VanderZee, Anil N. Hirani, Damrong Guoy and Edgar Ramos}
{Well-Centered Triangulation}

\section{Introduction}

A \emph{completely well-centered} mesh is a simplicial mesh in which
each simplex contains its circumcenter in its interior. A
3-dimensional example is a tetrahedral mesh in which the circumcenter
of each tetrahedron lies inside it and the circumcenter of each
triangle face lies inside it. Weaker notions of well-centeredness
  require that simplices of specific dimensions contain their
  circumcenters. In two dimensions, a completely well-centered
triangulation is the same thing as an acute triangulation.

Typical meshing algorithms do not guarantee well-centeredness. For
example, a Delaunay triangulation is not necessarily well-centered. In
this paper we discuss well-centered triangulations, with particular
application to triangle and tetrahedral meshes.  We present an
iterative energy minimization approach in which a given mesh, after
possible preprocessing, may be made well-centered by moving the
internal vertices while keeping the boundary vertices and connectivity
fixed.

A well-centered (primal) mesh has a corresponding dual mesh assembled
from a circumcentric subdivision \cite{Hirani2003}. For an
$n$-dimensional primal mesh, a $k$-simplex in the primal corresponds
to an $(n-k)$-cell in the dual. For example, in a well-centered planar
triangle mesh, the dual of a primal interior vertex is a convex
polygon with boundary edges that are orthogonal and dual to primal
edges. This orthogonality makes it possible to discretize the Hodge
star operator of exterior calculus \cite{AbMaRa1988} as a diagonal
matrix, simplifying certain computational methods for solving partial
differential equations and for topological calculations. Some
numerical methods that mention well-centered meshes in this context
are the covolume method \cite{Nicolaides1992} and Discrete Exterior
Calculus \cite{Hirani2003,DeHiLeMa2005}.

Well-centered meshes are not strictly required for these or other
related methods; however, some computations would be easier if such
meshes were available. For example, a stable mixed method for Darcy
flow has recently been derived using Discrete Exterior Calculus
\cite{HiNaCh2008} and applied to well-centered meshes generated by our
code and to Delaunay meshes. That numerical method passes patch tests
in 2 and 3 dimensions for both homogeneous and heterogeneous problems.
Figure~\ref{fig:Darcy_flow} (reproduced from \cite{HiNaCh2008} by
permission of the authors) shows the velocities from a solution to the
Darcy flow problem in a layered medium. The solution was computed with
that numerical method and a well-centered mesh. \begin{figure}[ht]
  \centering
  \includegraphics[width=260pt, trim=304pt 49pt 307pt 55pt, clip]
  {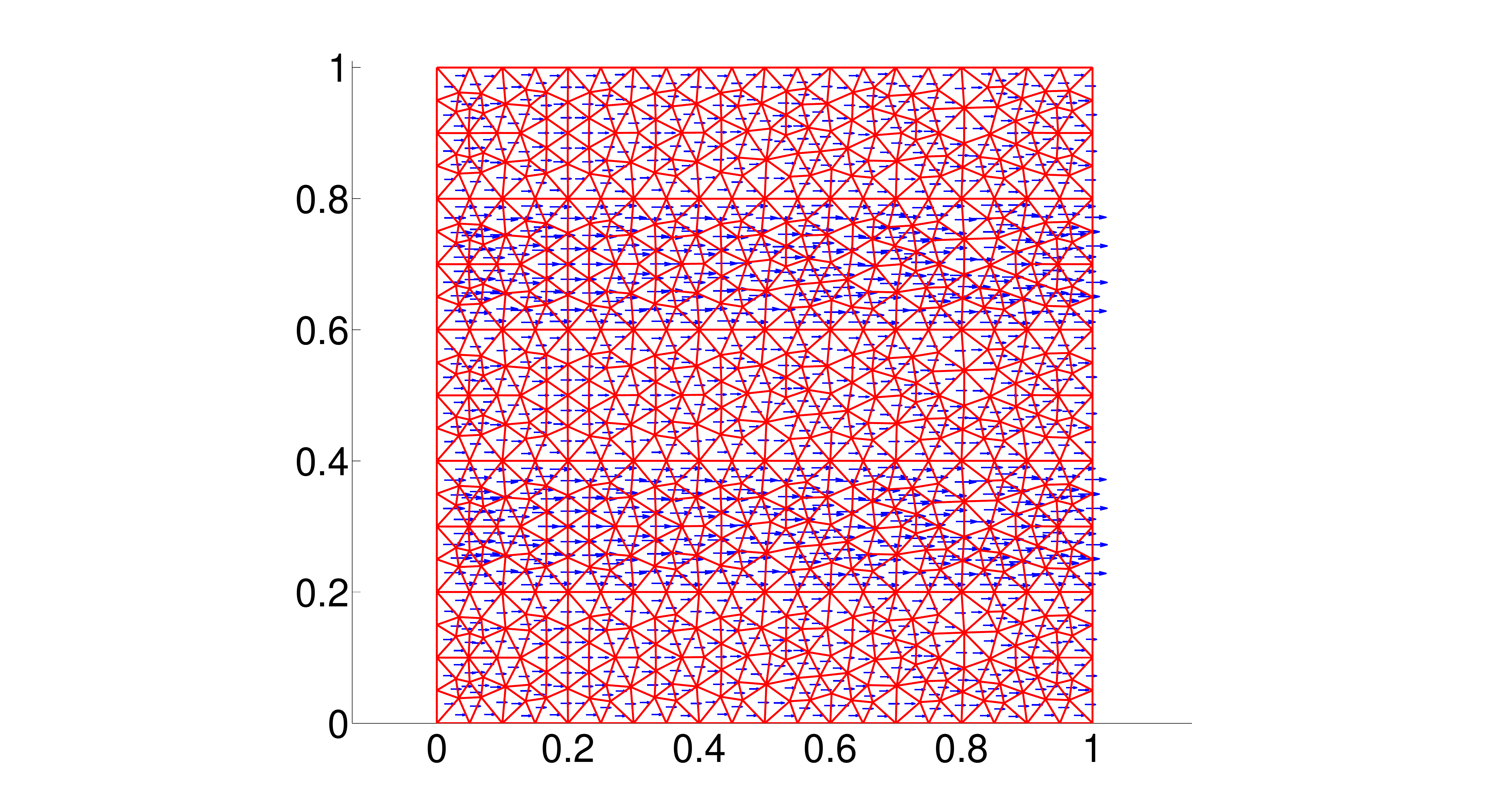}
  \caption{Darcy flow in a medium with 5 layers, computed on a
    well-centered mesh. The odd layers have a permeability of 5 and
    even layers have permeability of 10.  The velocities in the odd
    and even layers should be different and should have no vertical
    component, as shown. The mesh was created using our code.  Figure
    taken from \cite{HiNaCh2008}, used by permission from authors.}
  \label{fig:Darcy_flow}
\end{figure}

In the case of covolume methods applied to Maxwell's equations, a
justification for well-centered triangulation is given
in~\cite{SaHaMoWe2006,SaWaHaMoWe2006,SaHaMoWe2006a,SaHaMoWe2007}.

Another example from scientific computing is space-time meshing.  When
tent-pitching methods for space-time meshing were first introduced,
the initial spatial mesh was required to be acute, which for
two-dimensional meshes is the same thing as being well-centered
\cite{UnSh2002}. More recently this requirement has been avoided,
although at the expense of some optimality in the construction
\cite{ErGuSuUn2002}.

In two dimensions, well-centered meshes achieve optimality in
\emph{two} objectives that are important in some applications.  If a
planar point set has a well-centered triangulation,
that triangulation both
\emph{minimizes} the \emph{maximum} angle and \emph{maximizes} the
\emph{minimum} angle. We don't know any generalizations of this double
optimality to higher dimensions, but it is known that in any dimension
if the convex hull of a point set has a well-centered triangulation,
then that triangulation is unique and it is the Delaunay triangulation
\cite{Rajan1994}.

\section{Our Results} We \emph{characterize} well-centered
triangulations in arbitrary dimensions,
\emph{prove optimality} results for two-dimensional
well-centered triangulations, and give many experimental results.

The new characterization of well-centeredness that we give here is a
useful theoretical tool that allows us to relate well-centeredness and
Delaunay triangulation in arbitrary dimensions. In addition, it is
also a practical tool since it presents, for the first time, a path to
the creation of higher-dimensional well-centered triangulations.  Even
the formulation of an optimization approach for higher-dimensional
well-centeredness would be difficult without such a characterization.
Indeed, ours is the first algorithm
to even consider using an optimization approach
to seek well-centeredness. This approach allows us
both to improve existing triangulations in $\Real^{2}$
and to create well-centered triangulations in $\Real^{3}$.
We also prove optimality results about our cost function
and optimality results that relate well-cen\-tered\-ness to well-known
triangulation schemes. The specific results are enumerated below.

\begin{inparaenum}[(a)]
\item We introduce a new characterization of well-centeredness in
  arbitrary dimensions (Thm.~\ref{thm:characterize}). \item As a
  simple corollary (Cor.~\ref{cor:Delaunay}) we show that for any
  dimension $n$, an $n$-well-centered triangulation of a convex subset
  of $\Real^n$ is Delaunay, which is a new proof of a result in
  \cite{Rajan1994}.  \item Using the characterization of
  Thm.~\ref{thm:characterize} we define a family of cost functions
  $E_p$ (equation~\ref{eq:genE_p}) suitable for creating well-centered
  triangulations in arbitrary dimensions. \item With these we design
  an algorithm that optimizes meshes with the goal of producing
  well-centered meshes.  The
  algorithm generalizes our previous angle-based optimization in two
  dimensions, described in \cite{VaHiGuRa2007}.  \emph{Ours is the
    first known strategy
    for well-centeredness that generalizes to
    higher dimensions}.
\item Using the algorithm we produce a
  well-centered triangulation of a cube
  (Fig.~\ref{fig:cube_430}).
\item We show several two dimensional examples,
  including one with more than 60000 triangles
  (middle of Fig.~\ref{fig:geomeshes}).
\item  In two dimensions, every algorithm
  proven to generate acute
  triangulations may produce angles arbitrarily close to $\pi/2$.
  Moreover, in all cases we have tried, {\emph{our optimization
  algorithm can improve the quality of planar acute-angled
  triangulations produced by other heuristics for creating
  acute triangulations}}.
    A challenging example is shown in
  Fig.~\ref{fig:superior_1388}.
\item We also demonstrate numerically
  that \emph{graded triangulations maintain their gradation while
    being processed by our algorithm} (Fig.~\ref{fig:titan2D},
  \ref{fig:grdsquare_966}, \ref{fig:superior_1388}). This is useful
  since producing provably acute graded triangulations is an open
  problem.
\item For planar triangulations, we show that the \emph{minmax
    triangulation \cite{EdTaWa1992} is the optimal triangulation with
    respect to our energy} $E_{\infty}$ (Cor.~\ref{cor:minmaxisbest}).
\item 
We give a different proof for the acute angle case of a result
from~\cite{BeEp1995}; we show that if a planar point set admits
a~$2$-well-centered triangulation, then that triangulation is
the unique Delaunay triangulation and
the unique minmax triangulation of the point set
(Thm.~\ref{thm:uniquewct}).
\end{inparaenum}

Our \emph{experimental} results in three dimensions are rudimentary,
although even these
were not available before our work.  The difficulty in three
dimensions lies further upstream, in a step that precedes the
application of our optimization algorithm. In the planar case, an
interior vertex with four neighbors must be incident to an obtuse
triangle, but some simple connectivity preprocessing can fix this
problem \cite{VaHiGuRa2007}. Similarly, a tetrahedral mesh may have
topological obstructions to well-centeredness.  The topological
obstructions in this case, however, are not yet fully understood.
Some progress has been made in our other work \cite{VaHiGuRaZh2008} by
studying the link of (topological sphere around) a vertex, but much
remains to be done.  The techniques used to study such topological
obstructions are interesting, but they are transversal to this paper.

\section{Previous Results}
\label{sec:prevWork}
We are concerned with triangulations for which the domain is specified
by a polygonal or polyhedral boundary.  Our main objective is
obtaining well-centered triangle and tetrahedral meshes.  Relevant
work can be divided into constructive and iterative approaches.

Constructive approaches start with specified input constraints and
generate additional points, called Steiner points, and/or a
corresponding triangulation. Normally a point is committed to a
position and never moved afterwards. An algorithm for nonobtuse planar
triangulations based on circle packings is described in
\cite{BeMiRu1994}.  More recent works describe improved constructions
for nonobtuse triangulations while also describing how to derive an
acute triangulation from a nonobtuse one \cite{Maehara2002,
  Yuan2005}. There are two major difficulties with such
  methods. The first is that these algorithms aim to achieve a
  triangulation of size linear in the input size. As a result, the
  \emph{largest and smallest angles can be arbitrarily close to
    $\pi/2$ and 0 respectively}. The second major difficulty with
  these algorithms is that they \emph{do not offer a clear path
    towards a higher-dimensional generalization.} Moreover, we are
not aware of any existing implementations of these algorithms, which
seem to be primarily of theoretical interest.  As recently as 2007,
Erten and \"Ung\"or \cite{ErUn2007} proposed a variant of the Delaunay
refinement algorithm for generating acute triangulations of planar
domains.  This
heuristic, which relocates Steiner points after they
are added, has been implemented and appears to work quite well.
Experiments suggest, however, that the maximum angle in the output is
often near $\pi/2$, and our method is able to improve their meshes.
See, for example, the mesh of Lake Superior in
Section~\ref{sec:results}.

There is also a constructive algorithm that achieves a well-centered
quality triangulation of a point set \cite{BeEpGi1994} (with no
polygonal boundary specified), and an algorithm for constructing
nonobtuse quality triangulations \cite{MeSo1992}.  Also relevant is an
algorithm that, given a constraint set of both points and segments in
the plane, finds a triangulation that minimizes the maximum angle
\cite{EdTaWa1992}, without adding points.  If an acute triangulation
exists for the input constraints, the algorithm will find one.  The
most promising of the constructive algorithms is probably
\cite{ErUn2007} mentioned above. But for this algorithm, as well as
for the others mentioned in this paragraph, we are not aware of
higher-dimensional generalizations.

Yet another approach is the mesh stitching approach
in~\cite{SaHaMoWe2006,SaHaMoWe2007,SaWaHaMoWe2006}. In this scheme, the
region near the boundary and the interior far from boundary are meshed
seperately and these two regions are stitched with a special
technique. However, in three dimensions, the method is unable to
generate a well-centered triangulation in their
examples~\cite{SaHaMoWe2006}.

On the other hand, there are iterative or optimization approaches
which allow an initial triangulation (possibly the canonical Delaunay)
and then move the points while possibly changing the connectivity.
These algorithms often apply in three dimensions as well as two.
Moreover, there are many well-known existing meshing algorithms, some
of which generate quality triangulations \cite{Ruppert1995,
  Edelsbrunner2001} and have reliable implementations. An iterative
approach can start from an existing high-quality mesh and seek to make
it well-centered while retaining its high quality.

In the class of
iterative approaches there are optimization methods like centroidal
Voronoi tessellations \citep{DuFaGu1999, DuGuJu2003, DuWa2005},
variational tetrahedral meshing \cite{AlCoYvDe2005}.
Each of these methods has a global cost function that
it attempts to minimize
through an iterative procedure that alternates between updating the
location of the mesh vertices and the triangulation of those vertices.
Our algorithm has some similarities to these methods, but uses a cost
function explicitly designed to seek well-centered simplices, in
contrast to the cost functions optimized in \cite{DuFaGu1999} and
\cite{AlCoYvDe2005}.

There are also many iterative optimization methods
that, like our method, relocate vertices without changing the
initial mesh connectivity.  Traditional Laplacian
smoothing\cite{Winslow1964} is one such method.  Such methods
improve meshes according to some criteria, but do not
typically produce well-centered meshes.  (See, for example,
our comparisons with Laplacian smoothing in
Sections~\ref{subsec:lake} and~\ref{subsec:3dresult}.)

In addition to optimization approaches that work directly with a mesh,
there are several algorithms that generate circle packings or circle
patterns by optimizing the radii of the circles.  In particular, the
algorithms for creating circle patterns that were proposed in
\cite{CoSt2003} and \cite{BoSp2004} can be adapted to create
triangulations.  These algorithms produce circle patterns that have
specified combinatorics, but they do not permit a complete
specification of the domain boundary, so they are not appropriate to
our purpose.

The problem of generating a well-centered tetrahedralization in
$\Real^{3}$ is considerably harder than the two-dimensional
analogue. A complete characterization of the topological obstructions
to well-centeredness in three dimensions is still an open problem,
although a start has been made in our work
elsewhere~\cite{VaHiGuRaZh2008}. Similarly, the problem of generating
a three-dimensional acute triangulation---a~tetrahedralization in
which all the dihedral angles are acute---is more difficult than
generating a two-dimensional acute triangulation. For tetrahedra, it
is no longer true that well-centeredness and acuteness are equivalent
\cite[Section 2]{VaHiGu2008}.
In addition, acute tetrahedralizations
are known for only restricted domains.  For example,
until recently it was not
known whether the cube has an acute triangulation.  The
construction that showed the cube does have an acute
triangulation made use of the well-centered optimization
discussed in this paper~\cite{VaHiZhGu2009}.

\section{Characterization of Well-Centeredness}

We begin with a new characterization of well-centeredness in arbitrary
dimension. This characterization allows us to create 
an algorithm, described in Section~\ref{sec:iterative},
that uses optimization to seek well-centeredness.
It also serves, later in the current
section, as a theoretical tool in relating arbitrary-dimensional
well-centeredness to Delaunay triangulations.

Consider an $n$-dimensional simplex $\sigma^{n}$ embedded in Euclidean
space $\Real^{m}$, $m \ge n$.  The affine hull of $\sigma^{n}$,
$\aff(\sigma^{n})$, is the smallest affine subspace of $\Real^{m}$
that contains $\sigma^{n}$.  In this case, $\aff(\sigma^{n})$ is a
copy of $\Real^{n}$ embedded in $\Real^{m}$.  The circumcenter of
$\sigma^{n}$, which we denote $c(\sigma^{n})$, is the unique point in
$\aff(\sigma^{n})$ that is equidistant from every vertex of
$\sigma^{n}$.

For an $n$-simplex $\sigma^{n}$ with $n \ge 3$, it is possible for
$\sigma^{n}$ to contain its circumcenter $c(\sigma^{n})$ while some
proper face $\sigma^{p} \prec \sigma^{n}$ does not contain its
circumcenter $c(\sigma^{p})$.  It is also possible that for all $1 \le
p < n$ and all $\sigma^{p} \prec \sigma^{n}$, $c(\sigma^{p})$ lies in
the interior of $\sigma^{p}$, but $\sigma^{n}$ does not contain its
circumcenter.  (See \cite{VaHiGu2008} for examples with $n = 3$.)
Thus we say that an $n$-simplex $\sigma^{n}$ is a
{\emph{$(p_1,\ldots,p_k)$-well-centered simplex}} if for $p_i$, $i =
1,\ldots,k$, all faces of $\sigma^{n}$ of dimension $p_i \le n$
properly contain their circumcenters.  The parentheses are suppressed
when referring to only one dimension.  A simplex $\sigma^{n}$ is
{\emph{completely well-centered}} if it is
$(1,2,\ldots,n-1,n)$-well-centered.


In this section we give an alternate characterization for an
$n$-simplex $\sigma^{n}$ that is $n$-well-centered.
The characterization, which shows how the $n$-well-centered
$n$-simplex generalizes the acute triangle to higher dimensions,
%
%
uses the concept of an equatorial ball,
which we now define.

Let $\sigma^{n}$ be a simplex embedded in a
hyperplane $P^{m}$ with $m > n$.  The {\emph{equatorial ball}} of
$\sigma^{n}$ in $P^{m}$ is the closed ball $\{x \in P^{m}:\lvert x -
c(\sigma^{n})\rvert \le R(\sigma^{n})\}$, where $c(\sigma^{n})$ is the
circumcenter of $\sigma^{n}$, $R(\sigma^{n})$ its circumradius, and
$\lvert \cdot \rvert$ the standard Euclidean norm.  In this paper we
use the notation $B(\sigma^{n})$ for the equatorial ball of
$\sigma^{n}$.  The notation is used in the context of $\sigma^{n}
\prec \sigma^{n+1}$, and the hyperplane $P^{m}$
is understood to be $\aff(\sigma^{n+1})$.
The equatorial ball is an extension of the circumball into higher
dimensions; it is assumed throughout this paper that the
{\emph{circumball}} and {\emph{circumsphere}} of a simplex
$\sigma^{n}$ are embedded in $\aff({\sigma^{n}})$.  Note that here and
throughout the paper we have implicitly assumed that an $n$-simplex is
fully $n$-dimensional, though when a simplicial mesh is represented on
a computer it may be the case that some of the simplices are
degenerate.

\bigskip

\begin{theorem}
\label{thm:characterize}
The $n$-simplex $\sigma^{n} = v_{0}v_{1}\ldots v_{n}$ is
$n$-well-centered if and only if for each $i = 0,1,\ldots,n$, vertex
$v_{i}$ lies strictly outside $B^{n}_{i} :=
B(v_{0}v_{1}\ldots{v}_{i-1}v_{i+1}\ldots v_{n})$.
\end{theorem}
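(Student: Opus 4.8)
The plan is to work in the affine hull $\aff(\sigma^n) \cong \Real^n$ and translate the condition "$\sigma^n$ is $n$-well-centered" — i.e. the circumcenter $c = c(\sigma^n)$ lies in the interior of $\sigma^n$ — into a statement about barycentric coordinates, then convert each barycentric-coordinate inequality into the geometric statement that $v_i$ lies strictly outside the equatorial ball $B^n_i$. Let me think about the key computation.

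Write $c = \sum_{j} \lambda_j v_j$ with $\sum_j \lambda_j = 1$. Then $c$ is interior to $\sigma^n$ iff $\lambda_i > 0$ for all $i$. The key is to relate the sign of $\lambda_i$ to the position of $v_i$ relative to $B^n_i$, the equatorial ball of the opposite face $\tau_i := v_0 \cdots \widehat{v_i} \cdots v_n$ inside $\aff(\sigma^n)$.

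**Relating $\lambda_i$ to the power of $v_i$ w.r.t. $B^n_i$.**

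Let $c_i := c(\tau_i)$ be the circumcenter of the opposite face and $R_i$ its circumradius; $B^n_i$ is the ball of radius $R_i$ about $c_i$ in $\aff(\sigma^n)$. Note that $c$, the circumcenter of $\sigma^n$, is equidistant (distance $R = R(\sigma^n)$) from all $n+1$ vertices; in particular $|c - v_j| = R$ for all $j \ne i$, so $c$ lies on the axis of $\tau_i$ — the line through $c_i$ perpendicular to $\aff(\tau_i)$ within $\aff(\sigma^n)$. Hence $c = c_i + t\,\nu$ where $\nu$ is the unit normal to $\aff(\tau_i)$ pointing toward $v_i$, and $t \in \Real$. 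The sign of $t$ determines which side of $\aff(\tau_i)$ the circumcenter is on, and $\lambda_i > 0$ is equivalent to $c$ being strictly on the same side of $\aff(\tau_i)$ as $v_i$, i.e. $t > 0$. So the whole theorem reduces to: $t > 0 \iff v_i$ is strictly outside $B^n_i$.

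To get this last equivalence I would compute the power of the point $v_i$ with respect to $B^n_i$, namely $|v_i - c_i|^2 - R_i^2$. Decompose $v_i - c_i = h\,\nu + w$, where $h$ is the (signed) distance from $v_i$ to $\aff(\tau_i)$ — and $h > 0$ by the choice of $\nu$ — and $w \in \aff(\tau_i) - c_i$. Pythagoras gives $|v_i - c_i|^2 = h^2 + |w|^2$. Separately, from $|c - v_i|^2 = R^2 = |c - v_j|^2 = t^2 + R_i^2$ (using $c - v_j = c - c_i - (v_j - c_i) = t\nu - (v_j - c_i)$ and $|v_j - c_i| = R_i$, $\langle \nu, v_j - c_i\rangle = 0$), and also $|c - v_i|^2 = |(t - h)\nu - w|^2 = (t-h)^2 + |w|^2$, we get $(t-h)^2 + |w|^2 = t^2 + R_i^2$, hence $|w|^2 - h^2 = R_i^2 - 2th$, i.e. $|v_i - c_i|^2 - R_i^2 = 2h^2 - 2th \cdot(\text{sign bookkeeping})$ — more cleanly, $|v_i - c_i|^2 - R_i^2 = h^2 + |w|^2 - R_i^2 = h^2 + (2th - h^2) \cdot(\ldots)$. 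Rather than chase constants here, the point is that this algebra yields $|v_i - c_i|^2 - R_i^2 = 2ht$ (up to checking the sign convention), and since $h > 0$, the power of $v_i$ is positive $\iff t > 0$. That is exactly "$v_i$ lies strictly outside $B^n_i$ $\iff \lambda_i > 0$."

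**Assembling and the main obstacle.**

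Finally I would run this equivalence over all $i = 0, \ldots, n$ simultaneously: $\sigma^n$ is $n$-well-centered $\iff c \in \interior \sigma^n \iff \lambda_i > 0$ for every $i \iff v_i$ is strictly outside $B^n_i$ for every $i$, which is the claim. I should also remark (or invoke the degeneracy convention already stated in the excerpt) that a fully $n$-dimensional simplex has $h > 0$ strictly, so no division-by-zero or boundary cases arise. The main obstacle is purely the bookkeeping of signs in the middle step — making sure the unit normal $\nu$ is oriented toward $v_i$, that $h$ and $t$ are signed consistently, and that the final relation "power of $v_i$ $= 2ht$" comes out with the correct sign so that "outside the ball" matches "$\lambda_i > 0$" rather than its negation. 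Once the orientation conventions are pinned down, the two Pythagorean identities close the argument immediately; the geometric content — that the circumcenter of $\sigma^n$ sits on the axis of each facet and its side is governed by whether the opposite vertex escapes that facet's equatorial ball — is what makes the equivalence transparent.
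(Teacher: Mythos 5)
Your proof is correct, and it rests on the same geometric pivot as the paper's: the circumcenter $c(\sigma^{n})$ lies on the perpendicular axis of each facet through that facet's circumcenter, so $n$-well-centeredness reduces to the statement that, for each $i$, $c(\sigma^{n})$ lies strictly on the $v_{i}$ side of $\aff(\sigma^{n-1}_{i})$; what remains is to match that side condition with ``$v_{i}$ outside $B^{n}_{i}$.'' You and the paper do this matching differently. The paper argues synthetically: it takes the point $x_{i}$ where the facet axis meets the circumsphere $S^{n-1}$ inside the half-space $H^{n}_{i}$, checks that $x_{i}$ is outside $B^{n}_{i}$, and concludes that the whole cap $S^{n-1}\cap H^{n}_{i}$ (hence $v_{i}$) is outside $B^{n}_{i}$, relying on the qualitative fact that $\aff(\sigma^{n-1}_{i})$ splits $S^{n-1}$ into a part inside and a part outside the equatorial ball. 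You instead compute the power of $v_{i}$ with respect to $B^{n}_{i}$ exactly: writing $c = c_{i} + t\nu$ and $v_{i} - c_{i} = h\nu + w$ and combining the two Pythagorean identities yields $\lvert v_{i} - c_{i}\rvert^{2} - R_{i}^{2} = 2ht$ with $h>0$, so the sign of the power equals the sign of $t$, i.e.\ of the barycentric coordinate $\lambda_{i}$ of $c$. I verified that this identity comes out with the sign you want under your stated conventions, so the only step you left informal closes correctly. Your route buys a sharp quantitative identity in place of a separation argument (indeed, the same computation applied to an arbitrary point of $S^{n-1}$ at signed height $s$ gives power $2st$, which proves the paper's cap claim as a byproduct), at the price of carrying the orientation conventions explicitly; the paper's version needs no coordinates but must justify that one sample point $x_{i}$ determines the status of the entire cap.
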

\begin{proof}
%
  Figure~\ref{fig:eqballsproof} illustrates this proof in dimension
  $n = 2$.  It may help the reader understand the notation used in
  the proof and give some intuition for what the proof looks like in
  higher dimensions.

  First we suppose that $\sigma^{n}$ is $n$-well-centered.  Let
  $S^{n-1} = S^{n-1}(\sigma^{n})$ be the circumsphere of $\sigma^{n}$.
  Now $\aff(\sigma^{n})$ is a copy of $\Real^{n}$, and within that
  copy of $\Real^{n}$, $\sigma^{n}$ is an intersection of half-spaces.
  Considering some particular vertex $v_{i}$ of $\sigma^{n}$, we know
  that one
  of the bounding hyperplanes of $\sigma^{n}$ is the hyperplane
  $P^{n-1}_{i}$ that contains the simplex $\sigma^{n-1}_{i} =
  v_{0}v_{1}\ldots{v}_{i-1}v_{i+1}\ldots v_{n}.$

  Hyperplane $P^{n-1}_{i}$ partitions our copy of $\Real^{n}$ into two
  half-spaces --- an open half-space $H^{n}_{i}$ that contains the
  interior of $\sigma^{n}$ and vertex $v_{i}$, and a closed half-space
  that contains $\sigma^{n-1}_{i}$ (on its boundary).

  \begin{figure}
    \centering
    \scalebox{0.8}{%
      \includegraphics[width=250pt, trim = 126pt 237pt 108pt 225pt, clip]
      {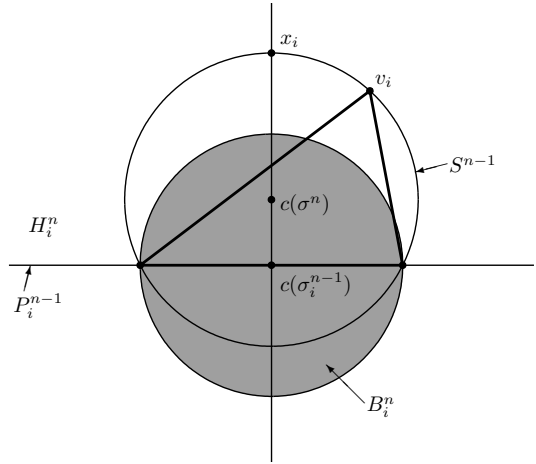}%
      \begin{picture}(0, 0)
        \put(-121, 82){$c(\sigma^{n-1}_{i})$}
        \put(-121, 120){$c(\sigma^{n})$}
        \put(-121, 199){$x_{i}$}
        \put(-76, 180){$v_{i}$}
        \put(-247, 72){$P^{n-1}_{i}$}
        \put(-242, 82){\vector(1,4){3}}
        \put(-240, 110){$H^{n}_{i}$}
        \put(-41, 138){$S^{n-1}$}
        \put(-42, 142){\vector(-4,-1){15}}
        \put(-80, 25){$B^{n}_{i}$}
        \put(-81, 29){\vector(-1,1){18}}
      \end{picture}}
    \caption{An illustration of the proof of
      Theorem~\ref{thm:characterize} in two dimensions.  In an
      $n$-well-centered simplex $\sigma^{n}$, vertex $v_{i}$ and
      circumcenter $c(\sigma^{n})$ lie in the same open half-space
      $H^{n}_{i}$, the region where circumsphere $S^{n-1}$ lies
      outside equatorial ball $B^{n}_{i}$.}
    \label{fig:eqballsproof}
  \end{figure}

  Because $\sigma^{n}$ is well-centered, $c(\sigma^{n})$ lies in its
  interior.  Thus $c(\sigma^{n})$ lies in $H^{n}_{i}$, the open
  half-space that contains $v_{i}$.  Consider, then, the line through
  $c(\sigma^{n})$ and $c(\sigma^{n-1}_{i})$.
  Within $H^{n}_{i}$, this line intersects $S^{n-1}$ at a point
  $x_{i}$ with $\lvert x_{i} - c(\sigma^{n})\rvert = R(\sigma^{n}).$
  Moreover, $\lvert x_{i} - c(\sigma^{n-1}_{i})\rvert > R(\sigma^{n})
  > R(\sigma^{n-1}_{i})$.  We see that $x_{i}$ lies outside
  $B^{n}_{i}$ and conclude that $S^{n-1} \cap H^{n}_{i}$ lies outside
  $B^{n}_{i}$.  In particular, since $v_{i} \in S^{n-1} \cap
  H^{n}_{i}$, we know that $v_{i}$ lies outside $B^{n}_{i}$.  Since
  $v_{i}$ was chosen arbitrarily, we conclude that $v_{i}$ lies
  outside $B^{n}_{i}$ for each $i = 0,1,\ldots,n$, and necessity is
  proved.

  For sufficiency we consider an $n$-simplex $\sigma^{n}$ such that
  $v_{i}$ lies outside $B^{n}_{i}$ for each $i=0,1,\ldots,n$.  We will
  show that the circumcenter $c(\sigma^{n})$ lies in the interior of
  $\sigma^{n}$ by demonstrating that for each vertex $v_{i}$,
  $c(\sigma^{n})$ lies in $H^{n}_{i}$.  We know that $P^{n-1}_{i}$
  cuts $S^{n-1}$ into a part inside $B^{n}_{i}$ and a part outside
  $B^{n}_{i}$, and we have just established that whichever of the
  (open) half-spaces contains $c(\sigma^{n})$ is the half-space where
  $S^{n-1}$ lies outside $B^{n}_{i}$.  Since we are given that $v_{i}
  \in S^{n-1}$ lies outside $B^{n}_{i}$, we know that $v_{i}$ and
  $c(\sigma^{n})$ must lie in the same open half-space $H^{n}_{i}$.
  This holds for every $v_{i}$, so $c(\sigma^{n})$ is in the interior
  of $\sigma^{n}$, and $\sigma^{n}$ is, by definition,
  $n$-well-centered.
\end{proof}

\bigskip

Figure~\ref{fig:eqballsxmpl} shows how Thm.~\ref{thm:characterize} can
be applied to a tetrahedron.  In Fig.~\ref{fig:eqballsxmpl} we see
that for each vertex $v_{i}$ of the tetrahedron, $v_{i}$ lies outside
of equatorial ball $B^{n}_{i}$.  By Thm.~\ref{thm:characterize} we can
conclude that the tetrahedron is $3$-well-centered, even though we
have not precisely located its circumcenter.  This clearly generalizes
the acute triangle; the angle at vertex $v_{i}$ of a triangle is acute
if and only if $v_{i}$ lies outside $B^{n}_{i}$, and a triangle is
$2$-well-centered if and only if each of its angles is acute.

\begin{figure}
  \centering
  \scalebox{0.95}{%
    \begin{minipage}[c]{93pt}
      \vspace{35pt}
      \includegraphics[width=93pt, trim=169pt 38pt 146pt 144pt, clip]
      {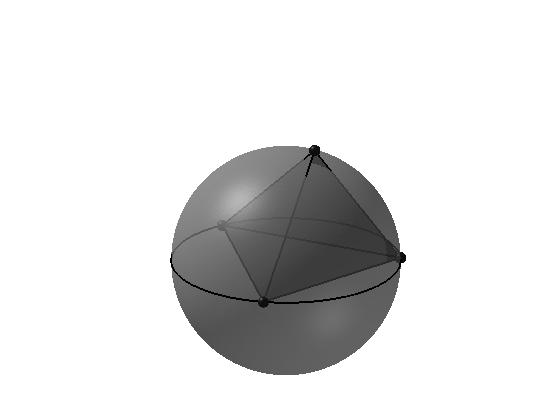}%
    \end{minipage}%
    \hspace{10pt}%
    \begin{minipage}[c]{81pt}
      \vspace{24pt}
      \includegraphics[width=81pt, trim=222pt 81pt 141pt 149pt, clip]
      {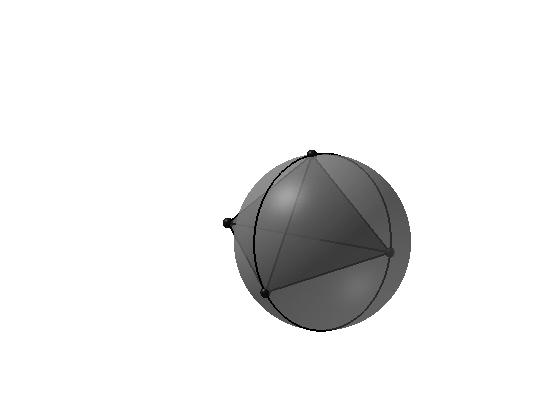}%
    \end{minipage}%
    \hspace{10pt}%
    \begin{minipage}[c]{101pt}
      \vspace{14.5pt}
      \includegraphics[width=101pt, trim=106pt 26pt 111pt 98pt, clip]
      {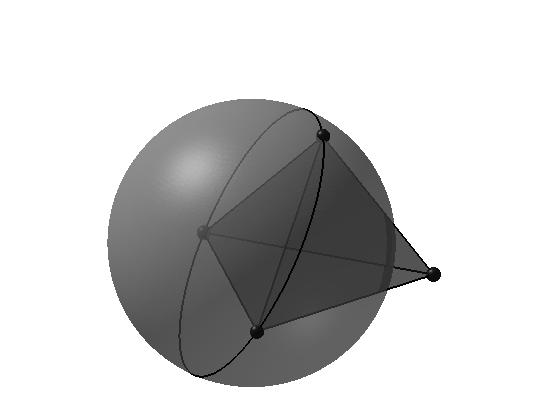}%
    \end{minipage}%
    \hspace{10pt}%
    \begin{minipage}[c]{85pt}
      \includegraphics[width=85pt, trim=175pt 65pt 113pt 85pt, clip]
      {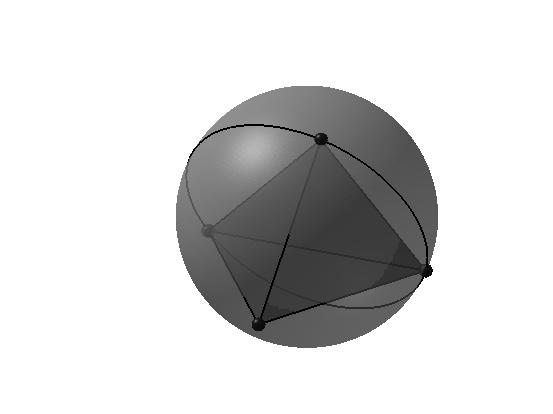}%
    \end{minipage}%
  }
  \caption{One characterization of $n$-well-centeredness of an
    $n$-simplex $\sigma^{n}$ is that for each vertex~$v_{i}$
    of~$\sigma^{n}$, $v_{i}$ lies outside of the equatorial ball
    $B^{n}_{i}$ of the facet $\sigma^{n}_{i}$ opposite $v_{i}$.}
\label{fig:eqballsxmpl}
\end{figure}

When we say that a mesh is a {\emph{$(p_1,\ldots,p_k)$-well-centered
    mesh}}, we mean that every element of the mesh is a
$(p_{1},\ldots,p_{k})$-well-centered simplex.  In the proof of
Thm.~\ref{thm:characterize} we showed that for each face
$\sigma^{n-1}_{i}$ of an $n$-well-centered $n$-simplex $\sigma^{n}$,
the hyperplane $\aff(\sigma^{n-1}_{i})$ cuts the circumball of
$\sigma^{n}$ into two pieces, one piece contained in $B^{n}_{i}$ and
the other piece lying on the same side of $\aff(\sigma^{n-1}_{i})$ as
the interior of $\sigma^{n}$.  It follows that the circumball of
$\sigma^{n}$ is contained in $\left(\bigcup_{i} B^{n}_{i}\right) \cup
\sigma^{n}$.  (It can be shown, in fact, that $\sigma^{n} \subset
\bigcup_{i} B^{n}_{i}$, but we do not need that result here.)
Moreover, if we consider some other $n$-well-centered
  $n$-simplex $\tau^{n}$ such that $\sigma^{n-1}_{i} = \tau^{n}
  \cap \sigma^{n}$, and if vertex $u$ is the vertex of
$\tau^{n}$ opposite $\sigma^{n-1}_{i}$, then
Thm.~\ref{thm:characterize} implies that $u$ is outside $B^{n}_{i}$.
Thus $u$ also lies outside the circumball of $\sigma^{n}$.  If the
underlying space of the mesh is a convex subset of $\Real^{n}$, we can
conclude that the mesh is locally Delaunay.  Since in any dimension a
locally Delaunay mesh is globally Delaunay~\cite{EdSh1996}, we obtain
a new proof of the following result, which was originally proved by
Rajan \cite{Rajan1994}.

\bigskip

\begin{corollary}
  \label{cor:Delaunay}
  If a simpicial mesh of a convex subset of $\Real^{n}$ is
  $n$-well-centered, then the mesh is a Delaunay triangulation of its
  vertices.
\end{corollary}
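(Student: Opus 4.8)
The plan is to deduce Corollary~\ref{cor:Delaunay} from Theorem~\ref{thm:characterize} by first checking that the mesh is \emph{locally Delaunay} and then invoking the fact that, on a convex domain, a locally Delaunay triangulation is globally Delaunay~\cite{EdSh1996}. Since the mesh triangulates a subset of $\Real^{n}$, every $n$-simplex $\sigma^{n}$ of the mesh has $\aff(\sigma^{n}) = \Real^{n}$, so the circumball of $\sigma^{n}$, the equatorial balls of its facets, and the vertices of neighbouring simplices all live in one common copy of $\Real^{n}$; in particular the equatorial ball of a facet does not depend on which of its two incident $n$-simplices we regard it as a face of.

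For the local step, fix an interior facet $\sigma^{n-1}_{i} = \tau^{n} \cap \sigma^{n}$ shared by two $n$-simplices $\sigma^{n}$ and $\tau^{n}$ of the mesh, and let $u$ be the vertex of $\tau^{n}$ opposite $\sigma^{n-1}_{i}$. Applying Theorem~\ref{thm:characterize} to $\tau^{n}$, with $u$ playing the role of $v_{i}$, shows that $u$ lies strictly outside the equatorial ball of $\sigma^{n-1}_{i}$; by the observation above this equatorial ball is exactly $B^{n}_{i} := B(\sigma^{n-1}_{i})$ computed relative to $\aff(\sigma^{n}) = \Real^{n}$. On the other hand, as established in the proof of Theorem~\ref{thm:characterize} applied to $\sigma^{n}$, the hyperplane $\aff(\sigma^{n-1}_{i})$ cuts the circumball of $\sigma^{n}$ into two pieces, the one lying on the side of $\aff(\sigma^{n-1}_{i})$ away from $\interior(\sigma^{n})$ being contained in $B^{n}_{i}$. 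Because $\tau^{n}$ and $\sigma^{n}$ lie on opposite sides of $\aff(\sigma^{n-1}_{i})$, the vertex $u$ sits in that same ``away'' half-space, and since $u \notin B^{n}_{i}$ it is not in the circumball of $\sigma^{n}$ either. As this holds across every shared facet, the mesh is locally Delaunay (there is nothing to verify at boundary facets, which bound only one simplex).

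To finish, I would invoke~\cite{EdSh1996}: a locally Delaunay triangulation of a convex subset of $\Real^{n}$ is globally Delaunay, so no mesh vertex lies in the interior of the circumball of any simplex, and hence the mesh is the Delaunay triangulation of its vertex set --- which recovers Rajan's theorem~\cite{Rajan1994}. The only geometric content beyond Theorem~\ref{thm:characterize} is the claimed inclusion of the ``away'' cap of the circumball of $\sigma^{n}$ in $B^{n}_{i}$, which hinges on the circumsphere of $\sigma^{n}$ and the sphere $\partial B^{n}_{i}$ meeting exactly along the circumsphere of $\sigma^{n-1}_{i}$ with the center of the former strictly on the $\interior(\sigma^{n})$ side; this was already carried out in the proof of Theorem~\ref{thm:characterize}, so the corollary really is immediate. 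The convexity hypothesis is used only in the final local-to-global step, which is the place to be careful.
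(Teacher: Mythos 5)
Your proof is correct and takes essentially the same route as the paper: the paper likewise applies Theorem~\ref{thm:characterize} to the neighboring simplex $\tau^{n}$ to place its opposite vertex $u$ outside $B^{n}_{i}$, uses the cap decomposition of the circumball of $\sigma^{n}$ by $\aff(\sigma^{n-1}_{i})$ to conclude $u$ lies outside that circumball, and then invokes the locally-Delaunay-implies-globally-Delaunay result of~\cite{EdSh1996} on the convex domain. Your explicit remarks that the equatorial ball of a facet is independent of the incident $n$-simplex and that boundary facets need no check are just slightly more careful spellings of what the paper leaves implicit.
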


\bigskip

The converse, of course, is not true.  Section~\ref{sec:globaltop}
gives more details for the planar case.

\section{Iterative Energy Minimization}
\label{sec:iterative}
Given a simplicial mesh, we seek to make the mesh well-centered by
minimizing a cost function defined over the mesh. We'll refer to the
cost function as \emph{energy}. Our method is somewhat similar to the
methods of \cite{AlCoYvDe2005} and \cite{DuFaGu1999} in that it uses
an iterative procedure to minimize an energy defined on the mesh, but
for reasons discussed in Section~\ref{sec:globaltop}, it differs in that
the mesh connectivity and boundary vertices remain fixed as the energy
is minimized.  Also, in contrast to the methods of \cite{AlCoYvDe2005}
and \cite{DuFaGu1999}, the cost function we minimize is explicitly
designed to achieve the aim of well-centeredness.  This section
describes the energy we minimize, which is the main component of our
method.

Before describing the energy we note that at times the mesh
connectivity or boundary vertices of an initial mesh are defined in
such a way that no well-centered mesh exists.  For such cases one can
apply a preprocessing algorithm to update the mesh connectivity.
Section~\ref{sec:globaltop} discusses this problem in more detail.

In the proof of Thm.~\ref{thm:characterize} we see that in order for a
simplex $\sigma^{n}$ to be $n$-well-centered, the circumcenter
$c(\sigma^{n})$ must lie on the same side of facet $\sigma^{n-1}_{i}$
as vertex $v_{i}$.  To convert this discrete variable into something
quantitative we introduce the function $h(v_{i}, \sigma^{n})$, the
signed distance from $c(\sigma^{n})$ to $\aff(\sigma^{n-1}_{i})$ with
the convention that $h(v_{i}, \sigma^{n}) > 0$ when $c(\sigma^{n})$
and $v_{i}$ are on the same side of $\aff(\sigma^{n-1}_{i})$.  The
magnitude of $h(v_{i}, \sigma^{n})$ can be computed as the distance
between $c(\sigma^{n})$ and $c(\sigma^{n-1}_{i})$, and its sign can be
computed by testing whether $c(\sigma^{n})$ and $v_{i}$ have the same
orientation with respect to $\aff(\sigma^{n-1}_{i})$.
A mesh is $n$-well-centered if and only
if $h(v_{i}, \sigma^{n}) > 0$ for every vertex
$v_{i}$ of every $n$-simplex $\sigma^{n}$ of the mesh.

We divide the quantity $h(v_{i}, \sigma^{n})$ by the circumradius
$R(\sigma^{n})$ to get a quantity that does not depend on the size of
the simplex $\sigma^{n}$.  We expect a cost function based on
$h(v_{i}, \sigma^{n})/R(\sigma^{n})$ to do a better job than the basic
$h(v_{i}, \sigma^{n})$ at preserving properties of the initial mesh.
In particular, the grading (relative sizes of the elements) of the
initial mesh should be preserved better with $h/R$ than with $h$.
Sazonov et al. have also noticed that cost functions based on the
quantity $h/R$ may be helpful in quantifying
well-centeredness~\cite{SaHaMoWe2006}.


%

Note that $-1 < h(v_{i}, \sigma^{n})/R(\sigma^{n}) < 1$ for finite
$\sigma^{n},$ because $R(\sigma^{n})^2 = h(v_{i}, \sigma^{n})^2 +
R(\sigma_{i}^{n-1})^2$.  Instead of using the quantity $h/R$ directly,
we consider the function
\[
f_{n}(\sigma^{n}) = \max_{\mathrm{vertices~}v \in \sigma^{n}}
    \left\lvert \frac{h(v, \sigma^{n})}{R(\sigma^{n})} - k_{n}\right\rvert,
\]
where $0 < k_{n} \le 1$ is a constant that may depend on the dimension
$n$ of the simplex.  The advantage of minimizing $f_{n}$ as opposed to
maximizing $h/R$ is that if $k_{n}$ is chosen properly, the measure
penalizes simplex vertices where $h/R$ approaches $1$ (e.g., small
angles of triangles and sharp points of needle tetrahedra) as well as
vertices where $h/R \le 0$.

We want to choose $k_{n}$ so that $f_{n}(\sigma^{n})$ is minimized
when $\sigma^{n}$ is the regular $n$-simplex.  Taking $k_{n} = 1/n$
may seem like a good choice because it is clear that the regular
simplex minimizes $f_{n}$.  (When $k_{n} =1/n$, $f_{n}(\sigma^{n}) =
0$ for the regular $n$-simplex $\sigma^{n}$).  We show in
Lemma~\ref{lemma:regsmplxmin}, however, that the regular simplex
minimizes $f_{n}$ for any $1 \ge k_{n} \ge 1/n$.

\bigskip

\begin{lemma}
  For $k_{n} \ge 1/n$, the measure $f_{n}(\sigma^{n})$ is minimized
  when $\sigma^{n}$ is a regular simplex.
\label{lemma:regsmplxmin}
\end{lemma}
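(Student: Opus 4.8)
The plan is to understand the quantity $h(v_i,\sigma^n)/R(\sigma^n)$ well enough to bound it from below across all vertices of all $n$-simplices. First I would set $r_i := h(v_i,\sigma^n)/R(\sigma^n) \in (-1,1)$, so that $f_n(\sigma^n) = \max_i |r_i - k_n|$. The key structural fact is the relation $\sum_i r_i = 1$ (equivalently $\sum_i h(v_i,\sigma^n) = R(\sigma^n)$), which should follow from writing the circumcenter $c(\sigma^n)$ in barycentric-type coordinates relative to the facets: the signed distance $h(v_i,\sigma^n)$ from $c(\sigma^n)$ to $\aff(\sigma^{n-1}_i)$, normalized by $R(\sigma^n)$, behaves like a ``barycentric coordinate of the circumcenter with respect to the facet hyperplanes'' and these must sum to the value the circumcenter's own coordinate would give — I would verify this sums to $1$ by expressing $c(\sigma^n) - c(\sigma^{n-1}_i)$ as $h(v_i,\sigma^n)\,\hat n_i$ for the appropriate unit normal and dotting with a fixed vertex. (For the regular simplex each $r_i = 1/n$, giving the sum $1$ as a sanity check.)

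Given $\sum_i r_i = 1$ over $n+1$ terms, by averaging there is at least one index with $r_i \le 1/(n+1) < 1/n \le k_n$, and hence $|r_i - k_n| = k_n - r_i \ge k_n - 1/(n+1)$. This is a strictly positive lower bound, but it is not yet tight at the regular simplex, so a cruder averaging argument alone will not pin down the minimizer. Instead I would argue: since $\sum_i r_i = 1$, either all $r_i$ equal $1/n$ (the equal case), or some $r_j < 1/n$ strictly. In the latter case $f_n \ge |r_j - k_n| = k_n - r_j > k_n - 1/n \ge 0$. In the equal case, $f_n = |1/n - k_n| = k_n - 1/n$ (using $k_n \ge 1/n$). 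So in all cases $f_n(\sigma^n) \ge k_n - 1/n$, with equality forcing every $r_i = 1/n$; it then remains to show that $r_i = 1/n$ for all $i$ characterizes the regular simplex (up to similarity). For that direction I would note that $h(v_i,\sigma^n)/R(\sigma^n) = 1/n$ together with $R(\sigma^n)^2 = h(v_i,\sigma^n)^2 + R(\sigma^{n-1}_i)^2$ forces all facet circumradii $R(\sigma^{n-1}_i)$ equal, and all distances $|c(\sigma^n) - c(\sigma^{n-1}_i)|$ equal; an inductive argument on dimension (regular facets, all congruent, symmetrically placed about a common circumcenter) yields regularity.

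The main obstacle I expect is the clean proof of the identity $\sum_i h(v_i,\sigma^n)/R(\sigma^n) = 1$; everything downstream is elementary once that is in hand. One subtlety to handle carefully is the sign convention: $h(v_i,\sigma^n)$ is negative precisely when $c(\sigma^n)$ lies on the far side of facet $\sigma^{n-1}_i$ from $v_i$, so the identity must be stated and proved with signed distances throughout, and it is this signed version — not a sum of magnitudes — that makes the averaging step work. A possible shortcut for the identity: the map $x \mapsto$ (signed distances from $x$ to the $n+1$ facet hyperplanes, each normalized so the opposite vertex has coordinate... ) is affine, so it suffices to check the claimed linear relation on the $n+1$ vertices and one more point, or simply to verify that the circumcenter's signed distances, scaled by $1/R$, are an affine function of position whose values I can compute at the vertices of $\sigma^n$ directly. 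I would present whichever of these is shortest.
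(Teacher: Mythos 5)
There is a genuine gap: the identity on which your whole plan rests, $\sum_i h(v_i,\sigma^n) = R(\sigma^n)$, is false, and your own sanity check refutes it. For the regular $n$-simplex each $h(v_i,\sigma^n)/R(\sigma^n)$ equals $1/n$, so the sum over the $n+1$ vertices is $(n+1)/n = 1+1/n$, not $1$; concretely, for an equilateral triangle $\sum_i h_i = 3\cdot R/2$. (For a general triangle $\sum_i h_i/R = \cos A+\cos B+\cos C = 1 + r/R$, which is not even constant.) The downstream dichotomy inherits the inconsistency: if the sum of $n+1$ terms were $1$, the ``equal case'' would force each $r_i = 1/(n+1)$, not $1/n$, and your crude averaging bound $f_n \ge k_n - 1/(n+1)$ would then exceed the value $k_n - 1/n$ attained by the regular simplex --- a contradiction that signals the identity cannot be right. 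The identity that actually holds is the facet-volume-weighted one, $\sum_i \operatorname{vol}(\sigma^{n-1}_i)\,h(v_i,\sigma^n) = n\operatorname{vol}(\sigma^n)$ (signed cone decomposition of $\sigma^n$ from its circumcenter), and since $n\operatorname{vol}(\sigma^n) = r\sum_i \operatorname{vol}(\sigma^{n-1}_i)$, the \emph{weighted} average of the $h_i$ equals the inradius $r$, giving $\min_i h_i \le r$. But that only reduces the problem to the inequality $r/R \le 1/n$ with equality exactly for the regular simplex --- a nontrivial classical fact, and precisely the step your unweighted averaging was meant to bypass. As written, the proposal does not close.

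For comparison, the paper reaches $\min_i h_i \le r$ by an elementary geometric argument (the sphere of radius $\min_i h_i$ centered at the circumcenter lies on the correct side of every facet hyperplane, hence inside the simplex), handles non-well-centered simplices separately via the existence of a vertex with $h \le 0$, and then cites $r/R \le 1/n$ with its equality case from the literature. If you repair your argument with the weighted identity you land on the same two-step structure, so the substantive missing ingredient is a proof or citation of $r/R \le 1/n$ together with the characterization of equality; your concluding inductive sketch that $r_i = 1/n$ for all $i$ forces regularity is also left unproved, but it becomes unnecessary once that inequality is invoked.
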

\begin{proof}
  Suppose that $k_{n} \ge 1/n$.  For the regular simplex, then,
  $f_{n}(\sigma^{n}) = k_{n} - 1/n$.  Thus it suffices to show
  that for any simplex $\sigma^{n}$ there exists a
  vertex $v$ such that $h(v, \sigma^{n}) \le R(\sigma^{n})/n$;
  at such a vertex we have
  \[
  \left\lvert \frac{h(v, \sigma^{n})}{R(\sigma^{n})} - k_{n}\right\rvert
  = k_{n} - \frac{h(v, \sigma^{n})}{R(\sigma^{n})}
  \ge k_{n} - \frac{1}{n}\,.
  \]
  
  We have seen that for a simplex that is not
  $n$-well-centered, there exists a vertex $v$ with $h(v,\sigma^{n})
  \le 0$, so it remains to prove this for simplices that are
  $n$-well-centered.

  Suppose $\sigma^{n}$ is $n$-well-centered.  Let $h := \min_{i}
  h(v_{i},\sigma^{n}).$ Consider a sphere $S^{n-1} \subset
  \aff(\sigma^{n})$ with center $c(\sigma^{n})$ and radius $h$.  We
  claim that $\sigma^{n}$ contains the sphere $S^{n-1}$.  Indeed, for
  each facet $\sigma^{n-1}_{i}$ of $\sigma^{n}$, since the radius of
  $S^{n-1}$ is $h \le h(v_{i}, \sigma^{n})$ we have that the sphere
  $S^{n-1}$ is contained in the same half space as $c(\sigma^{n})$ and
  $v_{i}$.  Thus the sphere is contained in the intersection of half
  spaces that defines the simplex, i.e., is contained in the simplex.

  It follows, then, that $h \le r(\sigma^{n})$ where $r(\sigma^{n})$
  is the inradius of $\sigma^{n}$.  We know that $h/R \le r/R \le 1/n$
  and that equality is achieved for only the regular simplex.  (The
  inequality $r/R \le 1/n$ is proved in \cite{KlTs1979}, among
  others.)
\end{proof}

\bigskip

%
%

In light of Lemma~\ref{lemma:regsmplxmin}, taking $k_{n} = 1/2$,
independent of $n$, is a good strategy, because for $k_{n} = 1/2$ the
cost function $f_{n}$ will prefer any $n$-well-centered simplex to any
simplex that is not $n$-well-centered, and among all $n$-well-centered
simplices, $f_{n}$ will prefer the regular simplex over all others.
We use $k_{n} = 1/2$ for all of the results discussed in
Section~\ref{sec:results}.

For $k_{n} > 0$ the objective of $n$-well-centeredness is achieved
when $\lvert h/R - k_{n}\rvert < k_{n}$ at every vertex of every
simplex $\sigma^{n}$.  (Note that this is not a necessary condition if
$k_{n} < 1/2$.)
Our goal, then, is to minimize $\left\lvert h/R - k_{n}\right\rvert$
over all vertices and all simplices, driving it below $k_{n}$ at every
vertex of every simplex.  It could be effective to work directly with
\begin{equation}
  \label{eq:genE_infty}
  E_{\infty}\left(\meshM\right) = E_{\infty}\left(\meshV, \meshT\right)
  = \max_{\substack{\text{simplices~}\sigma^{n} \in \meshT\\
      \text{vertices~}v_{i} \in \sigma^{n} \cap \meshV}}
  \left\lvert \frac{h(v_{i}, \sigma^{n})}{R(\sigma^{n})}
    - \frac{1}{2}\right\rvert\, ,
\end{equation}
but we choose instead to minimize an approximation to $2 E_{\infty}$
given by
\begin{equation}
  \label{eq:genE_p}
  E_{p}\left(\meshM\right) = E_{p}\left(\meshV, \meshT\right) =
  \sum_{\substack{\sigma^{n} \in \meshT\\
      v_{i} \in \sigma^{n} \cap \meshV}}
  \left\lvert \frac{2 h(v_{i}, \sigma^{n})}{R(\sigma^{n})}
    - 1\right\rvert^{p},
\end{equation}
where $p$ is a parameter.  $\meshM$ here stands for a mesh consisting
of vertices $\meshV$ with particular coordinates and a connectivity
table $\meshT$ that describes which groups of vertices form simplices.
Note that $\lim_{p\to\infty}
\left(E_{p}\left(\meshM\right)\right)^{1/p} = 2
E_{\infty}\left(\meshM\right)$, so $E_{p}(\meshM)$ is indeed an
approximation to $2 E_{\infty}(\meshM)$.  The factor of $2$ is
included for numerical robustness.  The parameter $p$ influences the
relative importance of the worst vertex-simplex pair compared to the
other vertex-simplex pairs in computing the quality of the mesh as a
whole.  It is convenient to choose $p$ as a positive even integer,
since the absolute value need not be taken explicitly in those cases.

As stated, the measure $E_{p}(\meshM)$ leaves some ambiguity in the
case of a degenerate simplex, which may occur in a computational
setting.  For several reasons, including a desire to maintain upper
semicontinuity of the cost function, we
use the convention that any degenerate
simplex, even one with coincident vertices, has its circumcenter at
infinity and $h/R = -1$.

\begin{figure}
  \centering
  \resizebox{1.8 in}{!}{
    \input{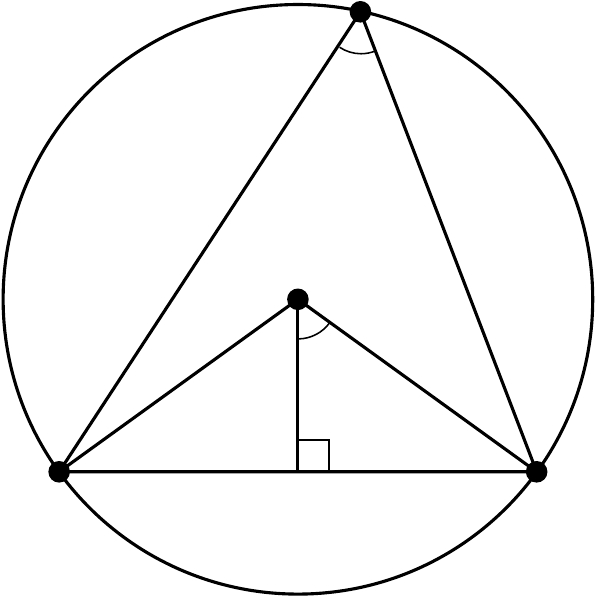_t}}
  \caption{For a triangle, $h/R = \cos(\theta).$}
  \label{fig:genE_p}
\end{figure}
 
Figure~\ref{fig:genE_p} shows the quantities $h$ and $R$ in a sample
triangle.
We see in the figure that $\cos(\theta) = h/R$.
Thus~\eqref{eq:genE_p} is a generalization of the energy
\begin{equation}
  \label{eq:Ep}
  E_p(\meshM) = E_p\left(\meshV, \meshT\right) = \sum_{\theta \in \meshM}
  \left\lvert2\cos(\theta) - 1\right\rvert^p,
\end{equation}
which is a constant multiple of the energy the authors proposed
earlier for achieving well-centeredness of planar triangle meshes
\cite{VaHiGuRa2007}.  In three dimensions the quantity $h/R$ is
related to the cosine of the tetrahedron vertex angle, as discussed
in~\cite{SaHaMoWe2006}.

The cost functions $E_{p}$ and $E_{\infty}$ are not convex.  When
designing a cost function for mesh optimization, one might hope to
develop a function that is convex, or, if not convex, at least one
that has a unique minimum.  It is, however, not possible to define an
energy that accurately reflects the goals of well-centered meshing and
also has a unique minimum.  Consider the mesh shown on the left in
Fig.~\ref{fig:notCnvx}.  We suppose that the boundary vertices are
fixed, but the interior vertex is free to move.  We want to decide
where to move the interior vertex in order to obtain a well-centered
mesh.  The right side of Fig.~\ref{fig:notCnvx} shows
where the free vertex can be placed to produce a well-centered mesh.
The light gray regions are not allowed because placing the free vertex
in those regions would make some boundary angle nonacute.  (The dotted
lines indicate how the four most important boundary angles influence
the definition of this region.)  The darker gray regions, shown
overlaying the light gray region, are not permitted because placing
the interior vertex in those regions would make some angle at the
interior vertex nonacute.

\begin{figure}
  \centering
  \includegraphics[width=2.1in, trim=3.1in 4.7in 3.0in 4.7in, clip]
  {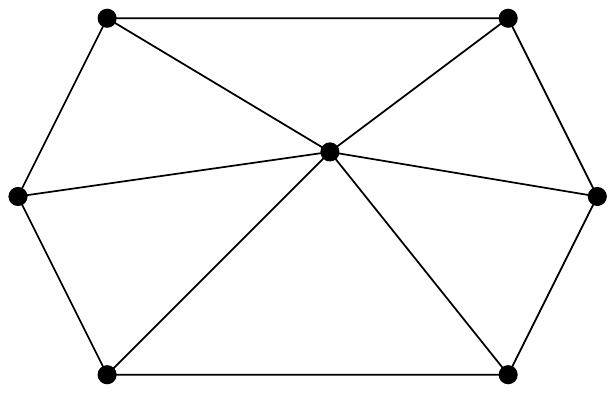}
  \quad\qquad
  \includegraphics[width=2.1in, trim=3.1in 4.7in 3.0in 4.7in, clip]
  {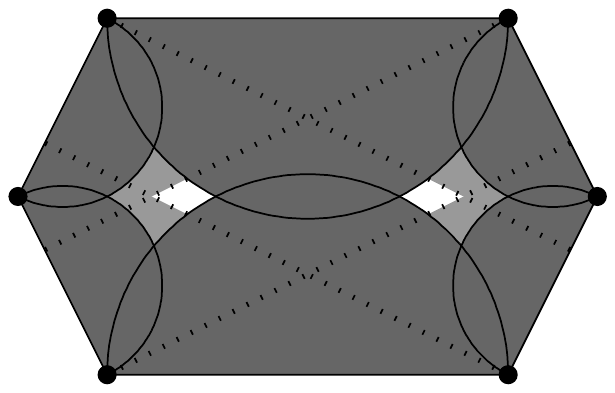}
  \caption{A cost function that accurately reflects the goal of
    well-centeredness cannot have a unique minimum, because the set of
    points that make the mesh well-centered may be a symmetric
    disconnected set.}
  \label{fig:notCnvx}
\end{figure}

If the interior vertex is placed in either of the two small white
regions that remain, the mesh will be well-centered.  We see that the
points permitted for well-centeredness form a disconnected set
in~$\Real^2$.  Moreover, the mesh is radially symmetric, so there is
no way to create an energy that prefers one white region over the
other unless we violate the desired property that the energy be
insensitive to a rotation of the entire mesh.  Any symmetric energy
that has minima in only the white regions must have at least two
distinct global minima.

In most planar triangle
meshes there is an interior vertex~$v$ that has exactly
six neighbors, all of which are interior vertices.  If all interior
vertices are free to move, as we assume in the method we propose, then
the six neighbors could be moved into the relative positions that the
boundary vertices have in the mesh in Fig.~\ref{fig:notCnvx}.  Moving
$v$ around when its neighbors have such positions should exhibit
nonconvexity in whatever cost function we might define.

\section{The Optimal Planar Triangulation}
\label{sec:globaltop}

A variety of our experimental results appears in
Section~\ref{sec:results} below.  The results support the claim that
$E_{p}$ is an appropriate cost function for quantifying the
$2$-well-centeredness of a planar mesh.  In some cases, though, the
mesh connectivity, the fixed boundary vertices, or a combination of
the two are specified in such a way that no well-centered mesh exists
with the given mesh connectivity and boundary vertices.  The simplest
example of this is a planar mesh with an interior vertex $v$ that has
fewer than five neighbors.  Since the angles around $v$ sum to $2\pi$,
$v$ has some adjacent angle of at least $\pi/2$.  The triangle
containing that angle is not $2$-well-centered.  Similarly, a boundary
vertex with a boundary angle measuring at least $\pi/2$ must have
enough interior neighbors to divide the boundary angle into pieces
strictly smaller than $\pi/2$.  We will refer to a vertex that does
not have enough neighbors as a \emph{lonely} vertex.  (In three
dimensions, a vertex must have at least 7 incident edges to permit
a 3-well-centered mesh, though having 7 neighbors is not sufficient to
guarantee that a 3-well-centered neighborhood exists.)

One way to approach problems with mesh connectivity, such as the
problem of lonely vertices, is a global mesh connectivity update,
i.e., to change the mesh connectivity over the entire mesh.  The
methods that use Voronoi diagrams \citep{DuFaGu1999} and variational
triangulations \cite{AlCoYvDe2005} both employ this approach, updating
to a Delaunay mesh each time the vertices are relocated.  In this
section we show that the optimal triangulation of a planar point set
with respect to the energy $E_{\infty}$ is a minmax triangulation,
i.e. a triangulation that minimizes the maximum angle.  Note that in
general a minmax triangulation is not a Delaunay triangulation.  (A
Delaunay triangulation is, rather, a maxmin triangulation of a planar
point set \cite{Sibson1978}).

There is an $O(n^2 \log n)$ time algorithm for computing the minmax
angle triangulation of a fixed set of points in the plane
\cite{EdTaWa1992}, so in the plane it might be feasible to
recompute the optimal triangulation at every step of our iterative
algorithm.  It is not clear, however, whether the algorithm of
\cite{EdTaWa1992} can be generalized into higher dimensions.
At the end of this section we discuss some other reasons to avoid
recomputing the optimal triangulation after each step of
energy minimization.

In the rest of this section we restrict our attention to a given set
of vertices $\meshV$ in $\Real^{2}$, fixed at their initial locations.
Given $\meshV$ we seek the mesh connectivity $\meshT$ that minimizes
$E_{\infty}(\meshV, \meshT)$. Throughout this section, where we refer
to mesh connectivity or triangulation it is assumed (often implicitly)
that we mean an admissible triangulation, i.e., a triangulation of
$\meshV$ that covers the convex hull of $\meshV$, $\conv(\meshV)$, and
has no inverted or overlapping triangles.  Many of the results would
apply when considering a different set of admissible triangulations,
but some might need small modifications, depending on the particular
set of triangulations admitted.

Since we are working in the plane, the discussion is based on planar
angles $\theta$ and the cost function defined in~\eqref{eq:Ep} in
terms of $\cos(\theta)$.  In particular we consider the cost functions
\begin{align*}
E_{cos}\left(\meshV, \meshT\right) &= 
    \max_{\theta \in \meshM} \big\{\left\lvert 2 \cos(\theta)
      - 1\right\rvert\big\}  = \lim_{p\rightarrow\infty}
      \left(\sum_{\theta \in \meshM} \left\lvert 2 \cos\left(\theta\right)
      - 1\right\rvert^p\right)^{1/p} \\
E_{min}\left(\meshV, \meshT\right) &= \min_{\theta \in \meshM}
  \left\{\theta\right\} \\
E_{max}\left(\meshV, \meshT\right) &= \max_{\theta \in \meshM}
  \left\{\theta\right\},
\end{align*}
where in the latter two cases we require $\theta \in [0, \pi]$.

We start by showing that when all triangulations of a planar point set
have a maximum angle that is at least $\pi/2$, a triangulation
minimizing $E_{max}$ is also a triangulation that minimizes $E_{cos}$.
This claim is readily proved as a corollary of the following
proposition.

\bigskip

\begin{proposition}
  \label{prop:trngfns} Let $f$ be a strictly increasing function of $\theta$
  and $g$ a nondecreasing function of $\theta$ for $\theta \in [0,
  \pi]$.  If $E_{f}(\meshT) = \max \{f(\theta_i)\}$ and $E_{g}(\meshT)
  = \max \{g(\theta_i)\}$, then $\arg\min E_{f} \subseteq \arg\min
  E_{g}$.
\end{proposition}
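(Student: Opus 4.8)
The plan is to reduce both $E_f$ and $E_g$ to functions of a single scalar, the maximum angle of the triangulation, and then exploit monotonicity. Fix the vertex set $\meshV$, and for an admissible triangulation $\meshT$ let $\theta_{\max}(\meshT)$ denote the largest angle appearing among its (finitely many) triangles. The first observation I would record is that for any nondecreasing function $\phi$ on $[0,\pi]$ one has $\max_i \phi(\theta_i) = \phi\big(\max_i \theta_i\big) = \phi\big(\theta_{\max}(\meshT)\big)$, since the angle list is finite and $\phi$ preserves order. Applying this with $\phi = f$ and with $\phi = g$ gives $E_f(\meshT) = f(\theta_{\max}(\meshT))$ and $E_g(\meshT) = g(\theta_{\max}(\meshT))$.

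Next I would use the strict monotonicity of $f$ to identify $\arg\min E_f$ precisely. Because $f$ is strictly increasing, it is an order isomorphism onto its image, so for admissible triangulations $\meshT_1,\meshT_2$ we have $E_f(\meshT_1) \le E_f(\meshT_2)$ if and only if $\theta_{\max}(\meshT_1) \le \theta_{\max}(\meshT_2)$. Hence $\meshT^{\ast} \in \arg\min E_f$ if and only if $\theta_{\max}(\meshT^{\ast}) \le \theta_{\max}(\meshT)$ for every admissible $\meshT$; that is, $\arg\min E_f$ is exactly the set of minmax-angle triangulations of $\meshV$.

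Finally, I would conclude the inclusion using only that $g$ is nondecreasing. Let $\meshT^{\ast} \in \arg\min E_f$ and let $\meshT$ be any admissible triangulation. By the previous step $\theta_{\max}(\meshT^{\ast}) \le \theta_{\max}(\meshT)$, and since $g$ is nondecreasing, $E_g(\meshT^{\ast}) = g(\theta_{\max}(\meshT^{\ast})) \le g(\theta_{\max}(\meshT)) = E_g(\meshT)$. As $\meshT$ was arbitrary, $\meshT^{\ast} \in \arg\min E_g$, which proves $\arg\min E_f \subseteq \arg\min E_g$.

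There is no serious obstacle here; the only point needing a little care is the innocuous-looking identity $\max_i \phi(\theta_i) = \phi(\max_i \theta_i)$, which is exactly where finiteness of the angle set and monotonicity of $\phi$ enter. It is also worth noting why we only obtain an inclusion and not equality: a merely nondecreasing $g$ may be constant on a subinterval of $[0,\pi]$, so $E_g$ need not single out the minmax triangulations, and $\arg\min E_g$ can be strictly larger than $\arg\min E_f$ — precisely the asymmetry the statement reflects.
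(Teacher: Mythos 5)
Your proof is correct and follows essentially the same route as the paper: both arguments reduce $E_f$ and $E_g$ to functions of the maximum angle and then compare. The only cosmetic difference is that you identify the maximizing angle explicitly as $\theta_{\max}(\meshT)$ via the identity $\max_i \phi(\theta_i) = \phi(\max_i \theta_i)$ for nondecreasing $\phi$, whereas the paper defines it implicitly as the angle where $f$ attains its maximum and then verifies by contradiction that $g$ attains its maximum there too --- the same fact in a slightly more roundabout form.
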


\begin{proof}
  For each triangulation $\meshT$, there exists some angle
  $\theta_{\meshT}$ such that $E_{f}(\meshT) = \max \{f(\theta_i)\} =
  f(\theta_{\meshT})$.  Thus for all other angles $\theta$ appearing
  in triangulation $\meshT$, we have that $f(\theta_{\meshT}) \ge
  f(\theta)$.

  Consider a specific triangulation $\meshT_0 \in \arg\min E_{f}$.  We
  have $E_{f}(\meshT_0) \le E_{f}(\meshT)$ for all triangulations
  $\meshT$.  Thus $f(\theta_{\meshT_0}) \le f(\theta_{\meshT})$
  Moreover, since $f$ is a strictly increasing function of $\theta$,
  we can conclude that $\theta_{\meshT_0} \le \theta_{\meshT}$ Then
  since $g$ is nondecreasing, we have $g(\theta_{\meshT_0}) \le
  g(\theta_{\meshT})$ for all triangulations $\meshT$.

  Now we claim that for arbitrary triangulation $\meshT$ we have
  $g(\theta_{\meshT}) \ge g(\theta)$ for all angles~$\theta$ appearing
  in triangulation~$\meshT$.  If this were not the case, then there
  would exist some angle $\hat\theta$ in $\meshT$ with $g(\hat\theta)
  > g(\theta_{\meshT})$.  Since $g$ is nondecreasing, it would follow
  that $\hat\theta > \theta_{\meshT}$, and since $f$ is strictly
  increasing, we would have $f(\hat\theta) > f(\theta_{\meshT})$.
  This, however, contradicts our definition of~$\theta_\meshT$, which
  states that $f(\theta_{\meshT}) = \max \{f(\theta_i)\} \ge
  f(\hat\theta)$.  We conclude that the claim is correct.

  It follows, then, that $g(\theta_{\meshT}) = \max \{g(\theta_i)\} =
  E_{g}(\meshT)$ for each triangulation $\meshT$.  In particular, the
  inequality $g(\theta_{\meshT_0}) \le g(\theta_{\meshT})$ implies
  that $E_{g}(\meshT_0) \le E_{g}(\meshT)$ for all triangulations
  $\meshT$.  By definition, $\meshT_0$ is a member of the set
  $\arg\min E_{g}$.
\end{proof}

\bigskip

\begin{corollary}
  If $f$ is a strictly increasing function of $\theta$ for $\theta \in
  [0, \pi]$, then $\arg\min E_{f} = \arg\min E_{max}$.
\end{corollary}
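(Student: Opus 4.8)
The plan is to derive the equality by invoking Proposition~\ref{prop:trngfns} twice, once in each direction, exploiting the observation that the identity map $\theta \mapsto \theta$ on $[0,\pi]$ is simultaneously strictly increasing and nondecreasing, so it can serve as either the function $f$ or the function $g$ in the hypothesis of that proposition. First I would note that $E_{max}(\meshT) = \max\{\theta_i\}$ is literally $E_g(\meshT)$ for $g(\theta) = \theta$.

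For the inclusion $\arg\min E_f \subseteq \arg\min E_{max}$, I would apply Proposition~\ref{prop:trngfns} directly with the given strictly increasing $f$ and with $g$ the identity, which is nondecreasing; the proposition yields exactly this inclusion. For the reverse inclusion $\arg\min E_{max} \subseteq \arg\min E_f$, I would apply the proposition with the roles swapped: take the identity as the strictly increasing function and the given $f$ as the nondecreasing function (it is nondecreasing precisely because it is strictly increasing). This gives $\arg\min E_{max} \subseteq \arg\min E_f$. Combining the two inclusions yields $\arg\min E_f = \arg\min E_{max}$.

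This corollary has essentially no obstacle; the one point that deserves a moment's care is that Proposition~\ref{prop:trngfns} is \emph{not} symmetric in its two functions --- it requires strict monotonicity on one side and only weak monotonicity on the other --- so to run the argument in both directions one must verify that the identity function legitimately occupies both roles, which it does. It is also worth remarking, for the equality of sets to be meaningful, that $\arg\min E_{max}$ is nonempty: a fixed finite point set in $\Real^2$ admits only finitely many admissible triangulations, so the maximum over triangulations is attained.
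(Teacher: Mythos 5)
Your proof is correct and follows exactly the paper's argument: apply Proposition~\ref{prop:trngfns} twice, using the identity function on $[0,\pi]$ first as the nondecreasing $g$ and then as the strictly increasing $f$, to get both inclusions. Your extra remark about which role each function plays in each direction is a slightly more careful spelling-out of the same idea.
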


\begin{proof}
  The function $E_{max}$ is of the form $E_{g}$ where $g$ is the
  identity function on $[0, \pi]$.  Since $g$ is a strictly increasing
  function, we may apply Proposition~\ref{prop:trngfns} in both
  directions to show that $\arg\min E_{f} \subseteq \arg\min E_{max}$
  and that $\arg\min E_{max} \subseteq \arg\min E_{f}$.  We conclude
  that $\arg\min E_{max} = \arg\min E_{f}$.
\end{proof}

\bigskip

\begin{corollary}
\label{cor:minmaxisbest}
If all triangulations of a set of vertices $\meshV$ that cover
$\conv(\meshV)$ have maximum angle at least $\pi/2$, then a
triangulation minimizing $E_{max}$ also minimizes $E_{cos}$ and vice
versa.
\end{corollary}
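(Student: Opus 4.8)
The plan is to reduce this to the corollary proved immediately above (the one stating that $\arg\min E_{f} = \arg\min E_{max}$ for any strictly increasing $f$), by using the hypothesis that \emph{every} admissible triangulation of $\meshV$ has maximum angle at least $\pi/2$. First I would set $f(\theta) = -2\cos\theta$, which is strictly increasing on $[0,\pi]$, and note that $E_{f}(\meshT) = \max_{\theta \in \meshT} f(\theta)$ is attained at the largest angle, so $E_{f}(\meshT) = -2\cos\bigl(E_{max}(\meshT)\bigr)$. By the preceding corollary, $\arg\min E_{f} = \arg\min E_{max}$, so it suffices to show $\arg\min E_{cos} = \arg\min E_{f}$.

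The key step is the exact identity $E_{cos}(\meshT) = 1 + E_{f}(\meshT)$ for every admissible triangulation $\meshT$. Write $\theta^{*} = E_{max}(\meshT)$ for the largest angle of $\meshT$; by hypothesis $\theta^{*} \ge \pi/2$, hence $\cos\theta^{*} \le 0$ and $\lvert 2\cos\theta^{*} - 1\rvert = 1 - 2\cos\theta^{*}$. For any other angle $\theta$ of $\meshT$ we have $\theta \le \theta^{*}$, so $\cos\theta \ge \cos\theta^{*}$, giving $1 - 2\cos\theta \le 1 - 2\cos\theta^{*}$; and $2\cos\theta - 1 \le 1 \le 1 - 2\cos\theta^{*}$ since $\cos\theta \le 1$ and $\cos\theta^{*} \le 0$. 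Since $\lvert 2\cos\theta - 1\rvert = \max\{\, 2\cos\theta - 1,\ 1 - 2\cos\theta\,\}$, these two bounds give $\lvert 2\cos\theta - 1\rvert \le 1 - 2\cos\theta^{*}$, with equality at $\theta = \theta^{*}$. Hence $E_{cos}(\meshT) = 1 - 2\cos\theta^{*} = 1 + E_{f}(\meshT)$. Because adding a constant does not move the minimizers, $\arg\min E_{cos} = \arg\min E_{f} = \arg\min E_{max}$, which is exactly the two-directional claim of the corollary.

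The only real subtlety — more the crux than an obstacle — is noticing that the hypothesis $\theta^{*} \ge \pi/2$ is precisely what forces $\lvert 2\cos\theta - 1\rvert$ to resolve with a single fixed sign at the maximizing angle, so that maximizing $\lvert 2\cos\theta - 1\rvert$ over the angles of $\meshT$ agrees, up to the additive constant $1$, with maximizing the monotone quantity $-2\cos\theta$, at which point the earlier corollary finishes the job. Without that hypothesis the map $\theta \mapsto \lvert 2\cos\theta - 1\rvert$ is only V-shaped (strictly decreasing on $[0,\pi/3]$, strictly increasing on $[\pi/3,\pi]$) rather than monotone, and this clean reduction breaks down.
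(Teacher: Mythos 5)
Your proof is correct and takes essentially the same route as the paper: both arguments observe that the hypothesis forces the maximum of $\lvert 2\cos\theta - 1\rvert$ over the angles of any admissible triangulation to be attained at the largest angle, where the function is monotone, and then invoke the preceding corollary about strictly increasing $f$. The only difference is cosmetic --- the paper ``redefines'' $\lvert 2\cos\theta - 1\rvert$ on $[0,\pi/2)$ to make it strictly increasing, whereas you substitute the genuinely monotone $f(\theta) = -2\cos\theta$ and prove the exact identity $E_{cos} = 1 + E_{f}$, which also absorbs the $\theta = 0$ edge case that the paper must treat separately.
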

\begin{proof}
  We can restate the corollary as follows.  If $E_{max} \ge \pi/2$ for
  all triangulations $\meshT$, then $\arg\min E_{cos} = \arg\min
  E_{max}$.  This follows because $E_{cos}$ is of the form $E_{f}$
  where $f = \left\lvert 2 \cos(\theta) - 1\right\rvert$ is a strictly
  increasing function on the interval $[\pi/2, \pi],$ and $f(\theta) <
  f(\pi/2)$ for $0 < \theta < \pi/2$.  For all practical purposes, we
  could redefine $f$ on $[0, \pi/2)$ to make $f$ a strictly increasing
  function on $[0, \pi]$.  The redefinition would have no effect
  because for all $\meshT$, the maximal $f(\theta_{i})$ occurs at some
  $\theta_{i} \ge \pi/2$.

  Some care should be taken if we allow meshes that have an angle
  $\theta = 0$, but we know that a triangle with an angle of $0$ has
  some angle measuring at least $\pi/2$, even if two of the triangle
  vertices coincide.  Since $f(\pi/2) = f(0)$, we may say that on a
  triangle with angle $0$, $f$ is maximized at the largest angle
  $\theta \ge \pi/2$.
\end{proof}

\bigskip

It should be clear that the proofs of Prop.~\ref{prop:trngfns} and
Cor.~\ref{cor:minmaxisbest} do not apply when a triangulation exists
with $E_{max} < \pi/2$.  In that case, $E_{cos}$ may be maximized at
some angle $\theta \approx 0$ rather than at the largest angle of the
mesh.  In the next theorem we establish that there is an important
relationship between $\arg\min E_{max}$ and $\arg\min E_{cos}$ even
when a well-centered triangulation exists.  (This theorem
is the acute angle case of a result from~\cite{BeEp1995},
presented here with a different proof.)

\bigskip

\begin{theorem}
\label{thm:uniquewct}
If a $2$-well-centered triangulation of a planar point set exists,
then that $2$-well-centered triangulation is unique and is both the
unique Delaunay triangulation of the point set and the unique minmax
triangulation of the point set.
\end{theorem}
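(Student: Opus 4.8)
The plan is to prove the three assertions—uniqueness of the $2$-well-centered triangulation, its being the unique Delaunay triangulation, and its being the unique minmax triangulation—by leveraging the local characterization of $2$-well-centeredness together with the known extremal properties of Delaunay and minmax triangulations in the plane. The starting observation is that a $2$-well-centered planar triangulation is acute, hence (by Corollary~\ref{cor:Delaunay} applied with $n=2$, since $\conv(\meshV)$ is convex) it is \emph{a} Delaunay triangulation of $\meshV$. So existence of a $2$-well-centered triangulation forces the point set to be in general position enough that the Delaunay triangulation is unique: indeed, if four points were cocircular with an empty circumscribing circle, then the quadrilateral they bound would admit two diagonal choices, and I would show that at least one of the resulting triangles has a non-acute angle (the angle subtended at a vertex by the diagonal, together with the inscribed-angle relation, forces one of the two triangulations to contain a right or obtuse angle—in fact both do, since the diagonal of a cyclic quadrilateral cuts off an angle that, summed with its opposite, gives $\pi$). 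This contradicts acuteness, so no four points of $\meshV$ are cocircular on an empty circle, the Delaunay triangulation is unique, and it coincides with the given $2$-well-centered one. That settles uniqueness and the Delaunay claim simultaneously.

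Next I would handle the minmax claim. Since the point set admits an acute triangulation, \emph{every} triangulation has maximum angle at least as large as the maximum angle of the acute one—but I need the stronger statement that the acute triangulation is the \emph{unique} minimizer of $E_{max}$. First, any triangulation $\meshT'$ that is not $2$-well-centered has $E_{max}(\meshT') \ge \pi/2 > E_{max}(\meshT_{wc})$, so it cannot be minmax-optimal; thus any minmax triangulation must itself be $2$-well-centered. But we have just shown the $2$-well-centered triangulation is unique, so it is automatically the unique minmax triangulation. This is the cleanest route: rather than comparing angles triangulation-by-triangulation, I reduce the minmax-optimality question to the uniqueness statement already in hand, using only the trivial bound that a non-acute triangulation has $E_{max} \ge \pi/2$ while the acute one has $E_{max} < \pi/2$.

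The main obstacle is the cocircularity argument that underlies uniqueness of the Delaunay triangulation—making precise that an empty cocircular $4$-tuple is incompatible with acuteness. The claim to nail down is: if $p_1 p_2 p_3 p_4$ are in convex position on a common circle $C$ with no other point of $\meshV$ inside $C$, then in any admissible triangulation of $\meshV$ the diagonal of this quadrilateral (whichever is chosen) produces a triangle with a non-acute angle. The key fact is the inscribed-angle theorem: the two angles subtending the chosen diagonal from the two opposite vertices sum to $\pi$ (since the quadrilateral is cyclic and the two vertices lie on opposite arcs), so at least one of those two subtended angles is $\ge \pi/2$, and that angle is an angle of one of the two triangles incident to the diagonal. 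I would need a brief remark handling the degenerate boundary cases (a cocircular tuple on the convex hull boundary, collinear triples) but these are covered by the convention that a degenerate triangle has an angle $\ge \pi/2$, already invoked in the proof of Corollary~\ref{cor:minmaxisbest}. Everything else—the reduction of minmax to uniqueness, and the Delaunay identification via Corollary~\ref{cor:Delaunay}—is immediate once this incompatibility is established.
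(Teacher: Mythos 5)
Your proposal is correct and follows essentially the same route as the paper: invoke Corollary~\ref{cor:Delaunay} to identify the well-centered triangulation with a Delaunay triangulation, rule out non-uniqueness of the Delaunay triangulation by showing that a non-triangular (cocircular) Delaunay cell forces a non-acute angle into any Delaunay triangulation, and then obtain minmax uniqueness from the observation that every triangulation that is not $2$-well-centered has maximum angle at least $\pi/2$ while the well-centered one does not. The one place you diverge is the cocircularity sub-lemma: the paper argues via ears of the triangulated cell and a closed semidisk containing one of them, whereas you use the inscribed-angle fact that opposite angles of a cyclic quadrilateral sum to $\pi$; both work, and yours is arguably more elementary. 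Note, however, that as stated your argument only covers a cocircular Delaunay cell with exactly four vertices. If the empty circle carries $k > 4$ points of $\meshV$, the quadrilateral bounded by four of them need not be split by one of its own diagonals in the Delaunay triangulation, since the remaining cocircular points may lie inside it. The fix is immediate with your own tool: any triangulation of the $k$-gon cell uses at least one internal diagonal, the two triangles flanking that diagonal have all four vertices on the circle, so the two angles subtending the diagonal sum to $\pi$ and one of them is at least $\pi/2$.
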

\begin{proof}
  Recall that if the Delaunay complex of a planar point set has a cell
  that is not triangular, then this cell is a convex polygon with more
  than three vertices.  The vertices of the polygon are all
  cocircular, and the circumcircle is empty of other points.  In this
  case a (nonunique) Delaunay triangulation may be obtained by
  triangulating each such polygon arbitrarily.  Any such Delaunay
  triangulation must contain an angle with measure $\pi/2$ or larger.

  This can be argued from considering the possible triangulations of a
  Delaunay cell that is not triangular.
  An {\emph{ear}} of the triangulation of the Delaunay cell is a
  triangle bounded by one diagonal and two edges of the Delaunay cell.
  Since the Delaunay cell has four or more
  vertices, 
  at least two triangles will be ears in any triangulation of the
  cell.  Moreover, we can divide the circumdisk of the Delaunay cell
  into a pair of
  closed semidisks in such a way that at least one semidisk completely
  contains an ear.  
  In an ear contained in a semidisk, the angle along the boundary of
  the Delaunay cell is at least $\pi/2$.  We conclude that if the
  Delaunay complex of a planar point set is not a triangulation, then
  no completion of the Delaunay complex to a triangulation (i.e., a
  Delaunay triangulation) yields a $2$-well-centered triangulation.


Suppose, then, that a point set permits a $2$-well-centered
triangulation $\meshT_{0}$.  By Cor.~\ref{cor:Delaunay}, $\meshT_{0}$
is a Delaunay triangulation.  The Delaunay triangulation is unique in
this case (by the argument of the preceding paragraph).  Moreover, any
other triangulation $\meshT$ of the point set has a maximum angle that
is at least as large as $\pi/2$.  (If not, $\meshT$ would be
$2$-well-centered, and, therefore, a Delaunay triangulation,
contradicting the uniqueness of the Delaunay triangulation.)  We
conclude that the minmax triangulation in this case is $\meshT_{0}$
and is unique.
\end{proof}

\bigskip

\begin{figure}
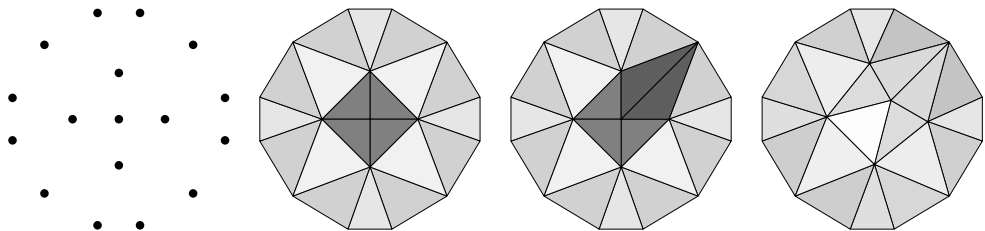

  \centering
  \includegraphics[width=85pt, trim=226pt 314pt 217pt 309pt, clip]%
  {minmaxprob_20/vertices}%
  \hspace{10pt}%
  \includegraphics[width=85pt, trim=229pt 317pt 220pt 312pt, clip]%
  {minmaxprob_20/initial}%
  \hspace{10pt}%
  \includegraphics[width=85pt, trim=229pt 317pt 220pt 312pt, clip]%
  {minmaxprob_20/modified}%
  \hspace{10pt}%
  \includegraphics[width=85pt, trim=229pt 317pt 220pt 312pt, clip]%
  {minmaxprob_20/E04_010}%
  \caption{The minmax triangulation may produce a triangulation in
    which interior vertices are lonely, even when there are
    triangulations with no lonely vertices.  The sequence of figures
    shows a point set, the minmax triangulation of the point set, an
    alternate triangulation of the point set with no lonely vertices,
    and a $2$-well-centered triangulation that is obtained from the
    alternate triangulation by optimizing $E_{4}$.}
  \label{fig:minmaxproblem}
\end{figure}

Combining Thm.~\ref{thm:uniquewct} with Cor.~\ref{cor:minmaxisbest} we
see that $\arg\min E_{cos} = \arg\min E_{max}$ in all cases.

Unfortunately, the minmax triangulation and the Delaunay triangulation
both have the undesirable property that they may have interior
vertices with only four neighbors, i.e., lonely vertices.
Figure~\ref{fig:minmaxproblem} shows a small point set for which the
minmax triangulation contains an interior vertex with only four
neighbors.  In this particular case, the minmax triangulation gives a
mesh for which the vertex locations optimize both $E_{\infty}$ and
$E_{4}$.  Thus optimizing $E_{\infty}$ or $E_{4}$ will not change the
mesh, even if we interleave the mesh optimization with recomputing the
optimal triangulation.

As long as we maintain the mesh connectivity given by this minmax
triangulation, we cannot make the mesh $2$-well-centered, regardless
of what function we optimize.  To address this problem we choose to
use an algorithm that preprocesses the mesh, updating the mesh
connectivity locally to eliminate lonely vertices.  The algorithm we
use for the two-dimensional case is outlined in \cite{VaHiGuRa2007}.
The preprocessing step applied to the minmax triangulation produces an
alternate triangulation of the initial vertex set.  (See
Fig.~\ref{fig:minmaxproblem}.)  For the new triangulation, optimizing
$E_{4}$ quickly finds a $2$-well-centered mesh.

A key reason that we choose to preserve the mesh connectivity
throughout the optimization process is that we want to
prevent the appearance of lonely vertices
during the optimization process.  It might
be interesting to interleave the energy optimization with a
retriangulation step that computed a triangulation that minimizes the
maximum angle among all triangulations with no lonely vertices, but we
do not know how to compute such a triangulation efficiently.  The
choice to maintain mesh connectivity during optimization also
simplifies the handling of meshes of domains with holes.

\section{Experimental Results}
\label{sec:results}

In this section we give some experimental results of applying our
energy minimization to a variety of meshes.  All of the initial meshes
shown here permit well-centered triangulations,
in many cases because the ``initial mesh'' is the output
of some preprocessing algorithm that improves the mesh
connectivity, e.g., the preprocessing algorithm described
in \cite{VaHiGuRa2007}.
The mesh optimization was
implemented using the Mesquite library developed at Sandia National
Laboratories \cite{BrDiKnLeMe2003}.  We implemented the cost function
$E_{p}$ by writing a new element-based {\texttt{QualityMetric}} with a
constructor accepting the argument $p$ and summing the energy values
on each element with the standard {\texttt{LPtoPTemplate}} objective
function (with power $1$).

We used Mesquite's implementation of the conjugate gradient method to
optimize $E_{p}$ on each mesh shown.  We did not write code for an
analytical gradient, so Mesquite numerically estimated the gradients
needed for the conjugate gradient optimization.  The optimization was
terminated with a {\texttt{TerminationCriterion}} based on the number
of iterations, so where the phrase {\emph{number of iterations}}
appears in the experimental results, it refers to the number of
iterations of the conjugate gradient method.  For the
three-dimensional meshes shown here we used the cost function $E_{p}$
for dimension $n = 3$, which is designed to find $3$-well-centered
meshes and is not sensitive to whether the facets of the tetrahedra
are acute triangles.

All of the experimental results discussed in this section were run on
a desktop machine with a dual 1.42 GHz PowerPC G4 processor and 2 GB
of memory.  
As is often the case with mesh optimization, the algorithm is quite
slow.  There are certainly opportunities for improving the efficiency
of the algorithm as well; the authors suspect that modifying the
algorithm to do optimization only in the regions where it is
necessary, instead of optimizing over the entire mesh, could improve
the efficiency significantly.

\begin{figure}
\centering
\includegraphics[width=250pt, trim=78pt 363pt 57pt 368pt, clip]
    {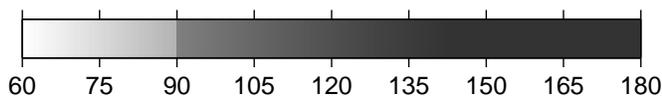}%
\caption{For two-dimensional meshes, the shade of a triangle
    indicates the measure of its largest angle.}
\label{fig:qscale}
\end{figure}

\emph{Shading scheme:} For all the two-dimensional meshes shown in
this section, we use the scale shown in Fig.~\ref{fig:qscale} to
determine the shade of each triangle.  The shade of a triangle is
determined by the measure of the largest angle of the triangle.  The
shade gets darker as the largest angle increases, with a noticeable
jump at $90$\textdegree\ so that $2$-well-centered triangles can be
distinguished from nonacute triangles.  For example, the three meshes
in Fig.~\ref{fig:minmaxproblem} use this shading scheme, and it should
be easy to identify the triangles that are not $2$-well-centered in
the first two meshes.

Along with figures of meshes, we include histograms that show
the distribution of the angles for two-dimensional meshes.  We
report near the histogram the percentage $p$ and
the number $n$ of nonacute triangles in each mesh.  The
mean of each distribution is $60$\textdegree, and the
standard deviation $\sigma$ is written near the distribution.

\subsection{Mesh of a Disk}

\begin{figure}
  \vspace*{90pt}%
  \hspace*{100pt}%
  \includegraphics[width=85pt, trim=198pt 294pt 184pt 313pt, clip]
  {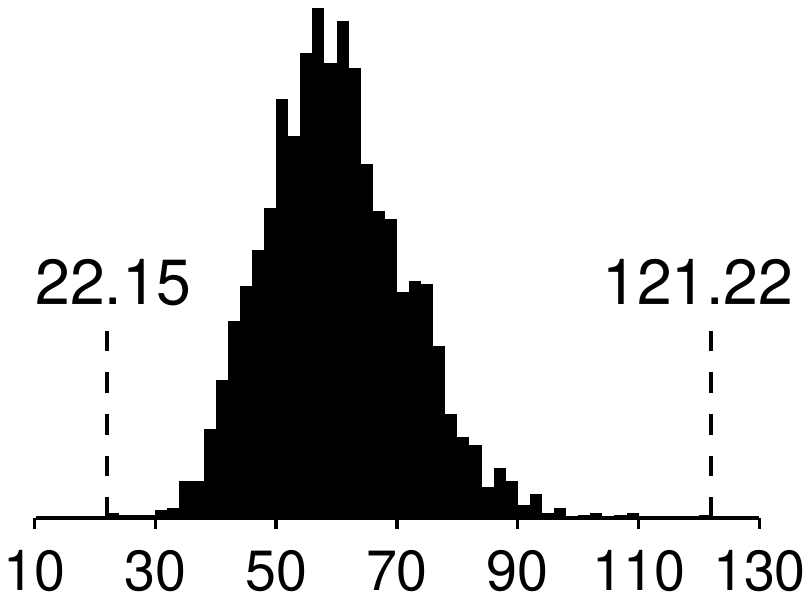}%
  \hspace{100pt}%
  \includegraphics[width=85pt, trim=198pt 294pt 184pt 313pt, clip]
  {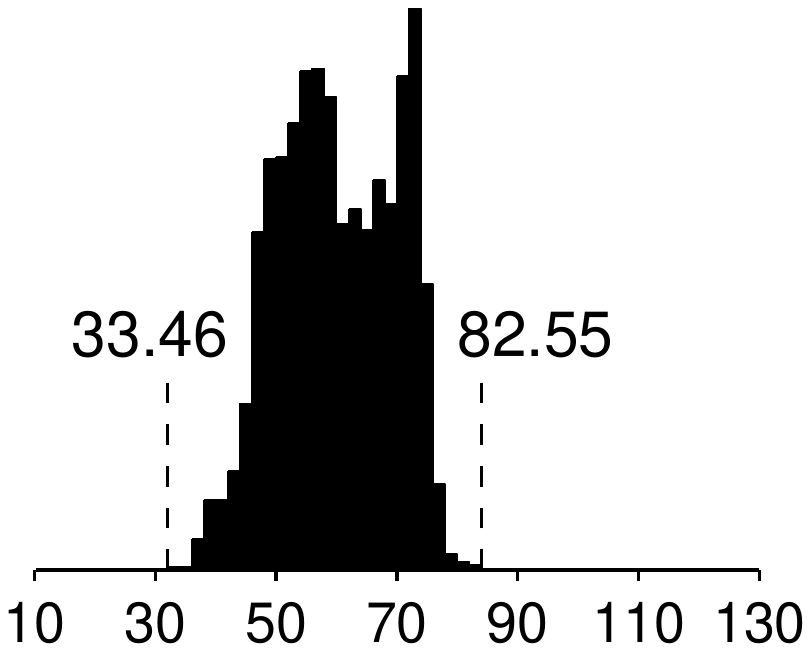}\\[-160pt]
  \includegraphics[width=125pt, trim=210pt 298pt 199pt 291pt, clip]
  {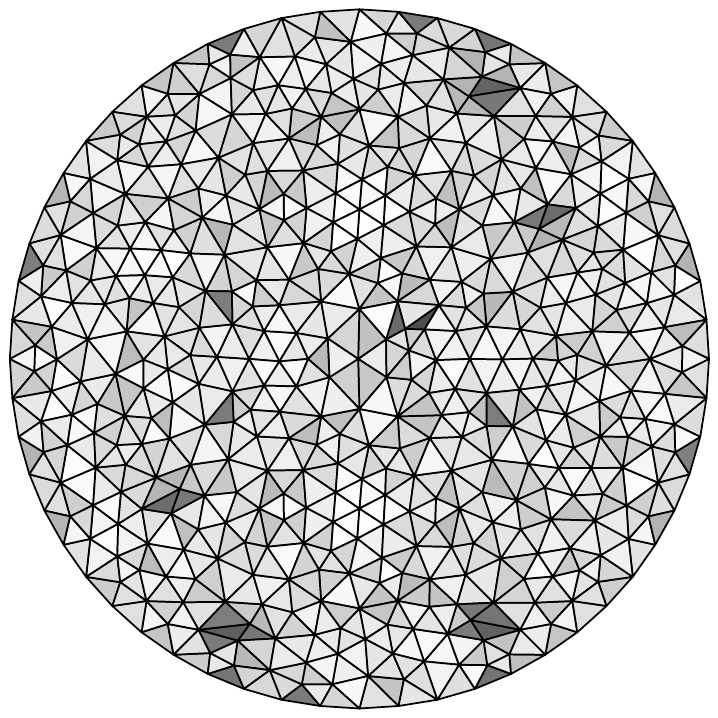}%
  \hspace{60pt}%
  \includegraphics[width=125pt, trim=210pt 298pt 199pt 291pt, clip]
  {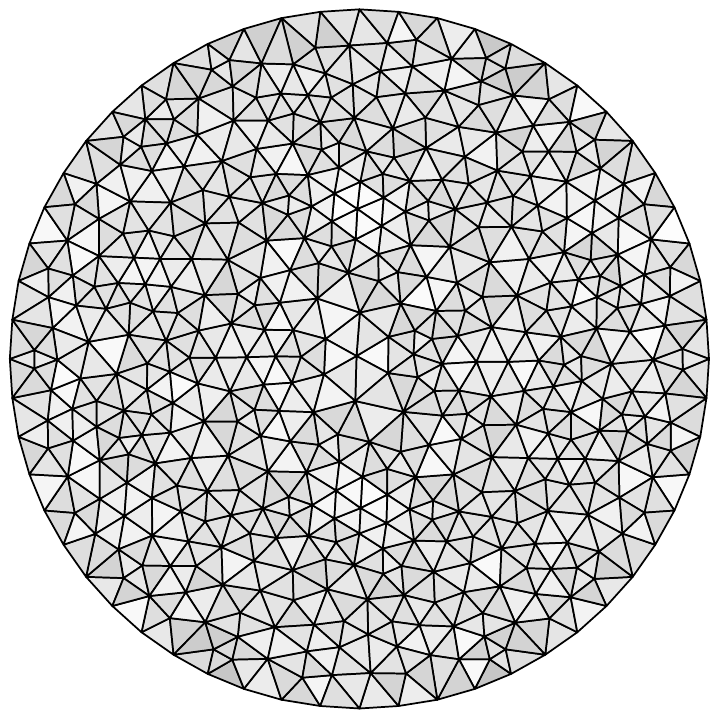}\\[5pt] 
  \begin{minipage}{125pt}
  \centering
  {\small $p = 3.10\%$\\
  $n = 27$\\
  $\sigma = 11.763$}
  \end{minipage}%
  \hspace{60pt}%
  \begin{minipage}{125pt}
  \centering
  {\small $p = 0.00\%$\\
  $n = 0$\\
  $\sigma = 9.702$}
  \end{minipage} 
  \caption{From the initial mesh shown at left, with
    $3.10\%$ of its triangles nonacute, minimizing $E_{4}$
    produces the $2$-well-centered mesh shown at right
    in $30$ iterations.  Histograms
    of the angles in the mesh are included, with the minimum and
    maximum angles marked on each histogram.  The optimization took
    $1.61$ seconds.}
  \label{fig:disk}
\end{figure}

The mesh of the disk in Fig.~\ref{fig:disk} is small enough that the
results of an experiment on the mesh can be visually inspected.  Many
of the triangles are already acute in the initial mesh, but some are
not.  Based on the shading scheme, we see visually that the result
mesh has no nonacute triangles.  The histograms of the angles in the
mesh confirm this, showing that the maximum angle was reduced from
$121.22$\textdegree\ to $82.55$\textdegree, and the minimum angle has
increased from $22.15$\textdegree\ to $33.46$\textdegree.  The
optimization took $1.61$ seconds.

\subsection{A Larger Mesh}

\begin{figure}
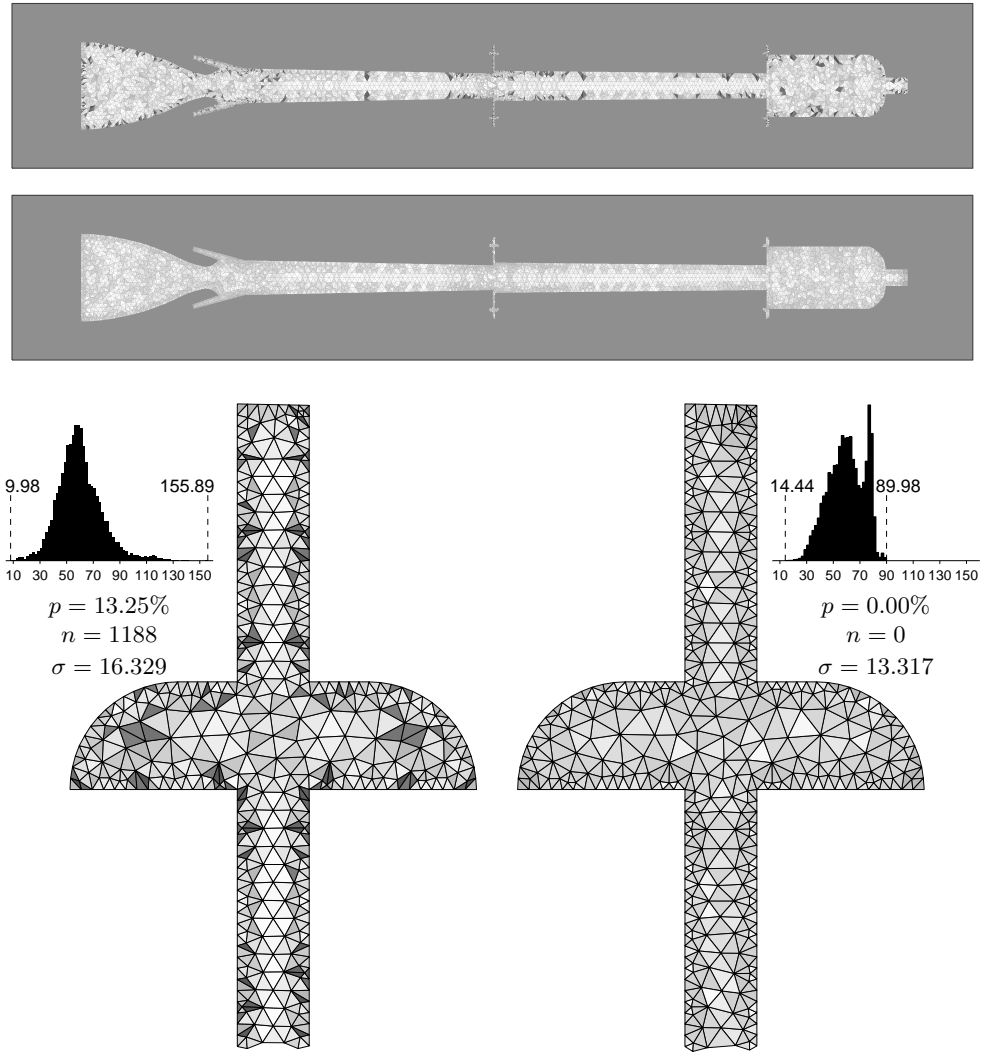

  \centering
  \includegraphics[width=370pt, trim = 1.4in 5.0in 0.9in 4.8in, clip]%
  {titan2D_8632/modifiedBW}\\
  \includegraphics[width=370pt, trim = 1.4in 5.0in 0.9in 4.8in, clip]%
  {titan2D_8632/f2E10_1000BW}\\[7pt]
  \begin{minipage}{80pt}
  \centering
  \includegraphics[width=80pt, trim=205pt 294pt 196pt 311pt, clip]%
  {titan2D_8632/modifiedallang}\\
  {\small $p = 13.25\%$\\
  $n = 1188$\\
  $\sigma = 16.329$}
  \end{minipage}%
  \hspace{210pt}%
  \begin{minipage}{80pt}
  \centering
  \includegraphics[width=80pt, trim=205pt 294pt 196pt 311pt, clip]%
  {titan2D_8632/f2E10_1000allang}\\
  {\small $p = 0.00\%$\\
  $n = 0$\\
  $\sigma = 13.317$}
  \end{minipage}\\[-106pt]
  \includegraphics[width=2.2in, trim = 2.7in 2.9in 2.4in 2.8in, clip]%
  {titan2D_8632/modified_part01BW}
  \hspace{7pt}%
  \includegraphics[width=2.2in, trim = 2.7in 2.9in 2.4in 2.8in, clip]%
  {titan2D_8632/f2E10_1000_part01BW}%
  \caption{Results of an experiment with a mesh of a 2-dimensional
    slice of the combustion chamber inside the Titan IV rocket.  The
    initial mesh is displayed at the top.  Below it is the result
    mesh, which was obtained by 1000 iterations minimizing $E_{10}$ on
    the mesh.  Histograms show the distribution of angles in the
    initial and final meshes.  The zoomed in views of the joint slot
    (at the top center of the full mesh) show the level of mesh
    refinement in the regions of higher detail.  For the histograms
    and the zoomed views, the original mesh is on the left, and the
    result mesh is on the right.  The optimization took $805.35$
    seconds.}
\label{fig:titan2D}
\end{figure}

In Fig.~\ref{fig:titan2D} we show results for a larger mesh, a
mesh of a two-dimensional slice of the combustion chamber inside the
Titan IV rocket.  This mesh, which is based on a mesh that the third
author produced from his work for the Center for Simulation of
Advanced Rockets, has 8966 triangles.  At the top of
Fig.~\ref{fig:titan2D} we show an overview of the entire mesh, with
the initial mesh at the very top and the result (after optimizing
$E_{10}$ for 1000 iterations) just below it.  These meshes are drawn
without showing element edges, because even the thinnest possible
edges would entirely obscure some parts of the mesh.
The background color helps define the boundary of the mesh by
providing contrast with the light gray elements.

Below the mesh overview is a zoomed view of the top center portion of
the mesh, which represents a portion of a joint slot of the titan IV
rocket.  Figure~\ref{fig:titan2D} also includes histograms of the
angle distribution of the full mesh before and after the optimization.
The angle histogram and zoomed portion for the initial mesh are shown
on the left, and for the optimized mesh are shown on the right.

In the initial mesh there are 1188 nonacute triangles
($\approx13.25\%$ of the triangles), with a maximum angle around
$155.89$\textdegree.  The result mesh has a maximum angle of
$89.98$\textdegree, and all but 143 triangles ($\approx1.59\%$) have
maximum angle below $85$\textdegree.  Of the 143 triangles that have
angles above $85$\textdegree, 14 have all three vertices on the
boundary and are thus completely specified by the boundary.  One
example of this is in the upper left corner of the zoomed view, where
there is a triangle that looks much like an isosceles right triangle.
Another 60 triangles are forced to have triangles larger than
$85$\textdegree{} because they are part of a pair of triangles along a
part of the boundary with small but nonzero curvature.
There are four such pairs along each
curved boundary in the zoomed view in Fig.~\ref{fig:titan2D}.  In
fact, all but 4 of the 143 ``worst'' triangles have at least one
boundary vertex, and the remaining 4 triangles each have a vertex that
is distance one from the boundary.

\subsection{Some More Difficult Tests} \label{subsec:twoholes}

The next mesh is a mesh of a circular domain with two circular
holes.  The initial mesh is far from being $2$-well-centered,
with $61.04\%$ of its triangles nonacute, and a standard deviation
$\sigma \approx 31.238$ for the angle distribution.  An initial
attempt to make the mesh well-centered was unsuccessful, but
two slightly different strategies, described later, do produce
a well-centered mesh.  The initial mesh and its angle histogram
are shown in Fig.~\ref{fig:twoHolesFloat_1232} (left)
along with the result of minimizing $E_{4}$ on the mesh
for $500$ iterations (right). In this case, the optimization
took $88.70$ seconds.  Comparing the optimized mesh to
the initial mesh we see that the quality has improved; the
percentage of nonacute triangles is reduced, the standard
deviation has improved, and many of the largest angles
have been reduced.

\begin{figure}
  \centering
  \begin{picture}(370, 172)
    \put(100,0){%
      \includegraphics[width=90pt, trim=201pt 297pt 193pt 311pt, clip]
      {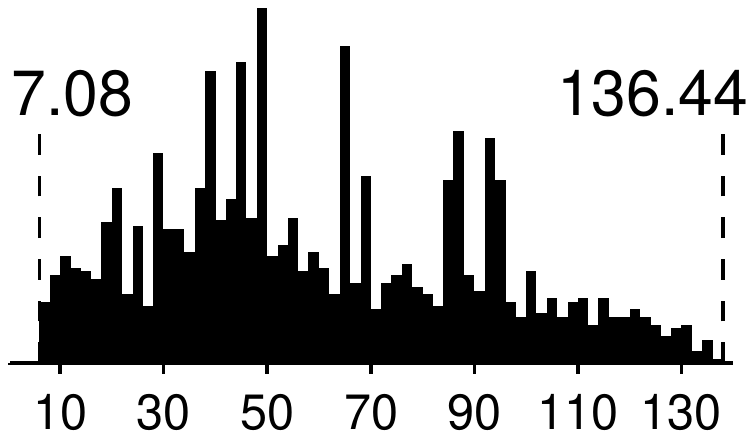}}
    \put(280,0){%
      \includegraphics[width=90pt, trim=201pt 297pt 193pt 311pt, clip]
      {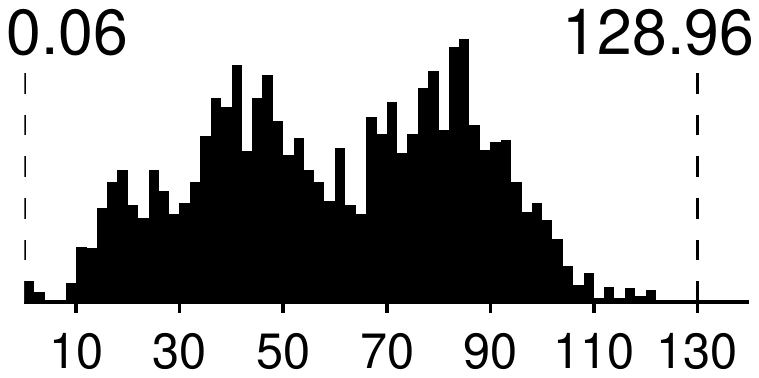}}
    \put(0, 37){%
      \includegraphics[width=135pt, trim=172pt 261pt 156pt 249pt, clip]%
      {twoHolesFloat_1232/initialBW}}
    \put(180, 37){%
      \includegraphics[width=135pt, trim=172pt 261pt 156pt 249pt, clip]%
      {twoHolesFloat_1232/E04_500BW}}
    \put(0, 12){%
      \begin{minipage}{135pt}
        \centering
        {\small $p = 61.04\%$\\
        $n = 752$\\
        $\sigma = 31.238$}
      \end{minipage}}
    \put(180, 12){%
      \begin{minipage}{135pt}
        \centering
        {\small $p = 38.88\%$\\
        $n = 479$\\
        $\sigma = 25.685$}
      \end{minipage}}
  \end{picture}%
  \caption{A first attempy at energy minimization applied to the two
    holes mesh on the left does not yield a well-centered mesh. Result
    after 500 iterations of $E_4$ minimization is shown on the
    right. The optimization took $88.70$ seconds.  The result mesh has
    some inverted triangles which are too thin to be seen.  In
    subsequent figures we show several strategies for producing a
    well-centered configuration.}
  \label{fig:twoHolesFloat_1232}
\end{figure}

Unfortunately, some of the
smallest angles of the initial mesh have also gotten smaller in
the optimized mesh.  In fact, four angles got so small
that their triangles became inverted in the optimized mesh.
The inverted triangles are too thin to actually see, but
there is one pair near the top right of the mesh and one pair
near the bottom left.  The energy value required to invert a
triangle is fairly large, but for large meshes or meshes
with a high percentage of bad triangles, improvements at
other locations in the mesh may be significant enough to
overcome the cost of triangle inversion for a small number
of the triangles in the mesh, and using the basic
energy $E_{p}$ can lead to inverted triangles.  Triangle
inversion can be prevented by including an inversion
barrier in the cost function.

{\bf{Energy combined with inversion barrier.}}  Modifying the energy
by introducing a term that has a barrier against inversion, i.e., a
term for which the energy value goes to infinity as a triangle moves
towards becoming degenerate, is probably the best way to handle the
problem of triangles that would become inverted with the basic $E_{p}$.
The {\texttt{IdealWeightInverseMeanRatio}} {\texttt{QualityMetric}}
provided by Mesquite is a cost function that has an implicit barrier
against inversion \cite{Munson2007}.  Let $E_{\text{imr}}$ represent
the cost function associated with the
{\texttt{IdealWeightInverseMeanRatio}}.  One can take a linear
combination of the energy $E_{p}$ with $E_{\text{imr}}$ to
create a new energy that has a barrier against inversion and,
depending on the coefficients, is still very much like $E_{p}$.
We have found that the energy
$\widetilde{E}_{p} := 100E_{p} + E_{\text{imr}}$ is often
effective in cases where the basic $E_{p}$ leads to inverted
triangles.  For this problem, for example, using $\widetilde{E}_{p}$
gives a well-centered result with no inverted triangles.  
Starting from the initial mesh and applying $500$ iterations
of~$\widetilde{E}_{4}$ followed by $500$ iterations
of~$\widetilde{E}_{6}$ and $500$ iterations of~$\widetilde{E}_{10}$
produced the $2$-well-centered mesh of the original domain displayed
in Fig.~\ref{fig:twoHolesFloat_1232_wct}.  The optimization took
$37.37 + 36.79 + 41.21 = 115.37$ seconds.

\begin{figure}
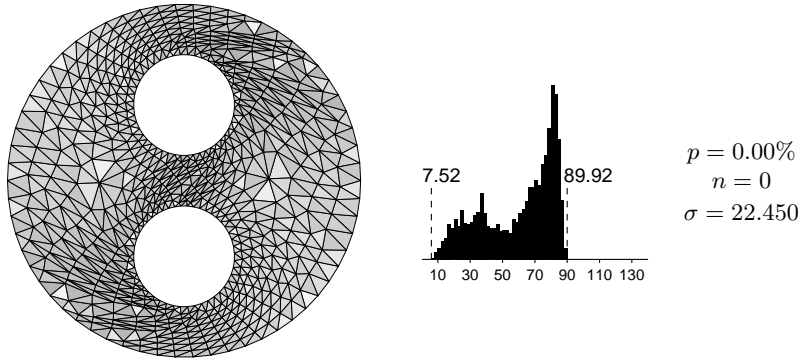

  \centering
  \begin{minipage}[c]{135pt}
    \includegraphics[width=135pt, trim=172pt 261pt 156pt 249pt, clip]%
    {twoHolesFloat_1232/mixnoinvoptBW}%
  \end{minipage}%
  \hspace{20pt}%
  \begin{minipage}[c]{90pt}
    \includegraphics[width=90pt, trim=201pt 297pt 193pt 311pt, clip]%
    {twoHolesFloat_1232/mixnoinvoptallang}%
  \end{minipage}%
  \hspace{5pt}
  \begin{minipage}[c]{50pt}
    \centering
    {\small $p = 0.00\%$\\
    $n = 0$\\
    $\sigma = 22.450$}
  \end{minipage}
  \caption{A $2$-well-centered mesh of the two holes domain conforming
    to the mesh connectivity and boundary vertices of the original two
    holes mesh shown in Fig.~\ref{fig:twoHolesFloat_1232}.  The mesh
    was obtained using slightly modified cost
    functions~$\widetilde{E}_{p}$ that have a barrier against triangle
    inversion.  The optimization procedure was $500$ iterations
    of~$\widetilde{E}_{4}$ followed by $500$ iterations
    of~$\widetilde{E}_{6}$ followed by $500$ iterations
    of~$\widetilde{E}_{10}$.  Total optimization time was $115.37$
    seconds.}
  \label{fig:twoHolesFloat_1232_wct}
\end{figure}

{\bf{Improved boundary vertex locations.}}  Another way to get a
well-centered mesh from this initial mesh is to make the optimization
problem easier by changing the location of the boundary vertices.  The
mesh on the left in Fig.~\ref{fig:twoHolesOrth_1232} has the same mesh
connectivity as the initial two holes mesh from
Fig.~\ref{fig:twoHolesFloat_1232}, but the vertices along the boundary
have moved.  In the initial mesh the vertices along each boundary
were equally spaced, but in this case, the vertices on the
outer boundary are more dense at the north and south and less dense at
the east and west.  The vertices along the inner boundary curves
have also moved a bit.  For this mesh we use the basic energy $E_6$,
reaching a well-centered configuration by
$200$ iterations.  The result, obtained
in $18.03$ seconds, appears on the right in
Fig.~\ref{fig:twoHolesOrth_1232}.

\begin{figure}
  \centering
  \begin{picture}(370, 174)
    \put(102,0){%
      \includegraphics[width=88pt, trim=201pt 297pt 185pt 311pt, clip]
      {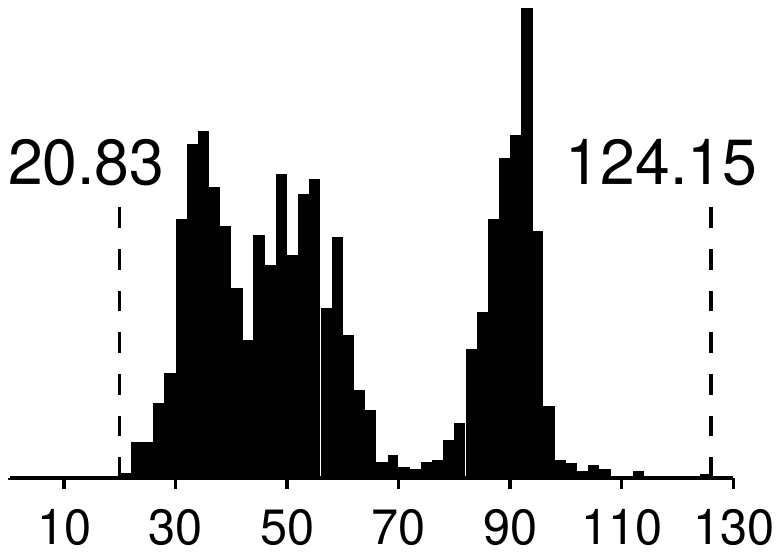}}
    \put(282,0){%
      \includegraphics[width=88pt, trim=201pt 297pt 185pt 311pt, clip]
      {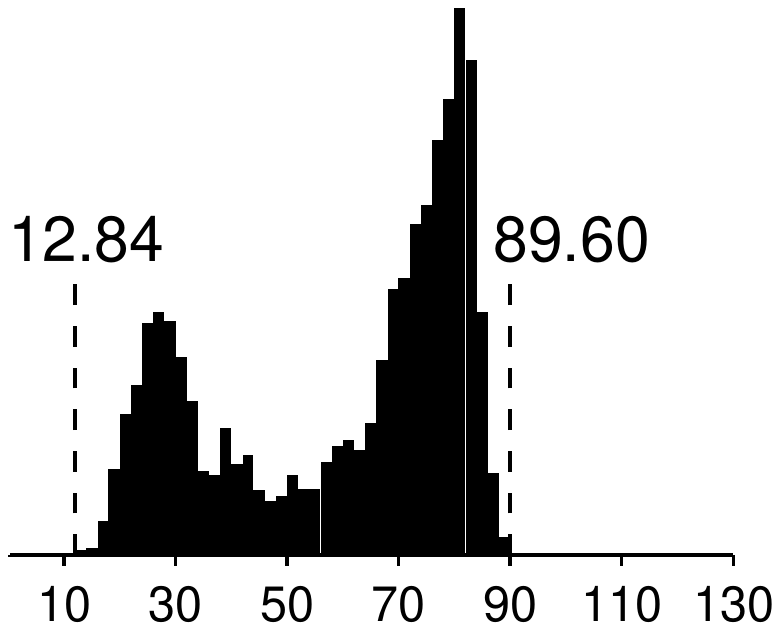}}
    \put(0, 39){%
      \includegraphics[width=135pt, trim=172pt 261pt 156pt 249pt, clip]%
      {twoHolesOrth_1232/initialBW}}
    \put(180, 39){%
      \includegraphics[width=135pt, trim=172pt 261pt 156pt 249pt, clip]%
      {twoHolesOrth_1232/f2E06_200BW}}
    \put(0, 14){%
      \begin{minipage}{135pt}
        \centering
        {\small $p = 54.63\%$\\
        $n = 673$\\
        $\sigma = 23.261$}
      \end{minipage}}
    \put(180, 14){%
      \begin{minipage}{135pt}
        \centering
        {\small $p = 0.00\%$\\
        $n = 0$\\
        $\sigma = 22.027$}
      \end{minipage}}
  \end{picture}%
  \caption{This mesh has the same mesh connectivity as the initial
    mesh in Fig.~\ref{fig:twoHolesFloat_1232}, but the vertices along
    the boundary (and in the interior) have been moved. The
    $2$-well-centered mesh on the right was obtained in $18.03$
    seconds with $200$ iterations of $E_6$ minimization.}
\label{fig:twoHolesOrth_1232}
\end{figure}

{\bf{Different mesh of the same domain.}}  The difficulty
of finding a $2$-well-centered mesh is primarily due to
the combined constraints of the mesh connectivity of the initial
mesh and the locations of the boundary vertices.  The
shape of the domain or the fact that the domain is not simply
connected are not inherently difficult for the problem of
$2$-well-centered triangulation.  When separated from the mesh
connectivity of the initial mesh, the location of the boundary
vertices are not a problem either.  We demonstrate this by
an experiment on the same domain with a completely
different mesh that has the same set of boundary vertices and
the same boundary vertex locations as the meshes
of Figs.~\ref{fig:twoHolesFloat_1232}
and~\ref{fig:twoHolesFloat_1232_wct}.  The experiment,
shown in Fig.~\ref{fig:twoholesdelref_1200}, produced a
mesh of the domain with maximum angle around
$79.50$\textdegree{} by optimizing $E_{8}$ for $100$
iterations.  The optimization took $7.44$ seconds.

\begin{figure}
  \centering
  \begin{picture}(370, 174)
    \put(102,0){%
      \includegraphics[width=88pt, trim=201pt 297pt 185pt 311pt, clip]
      {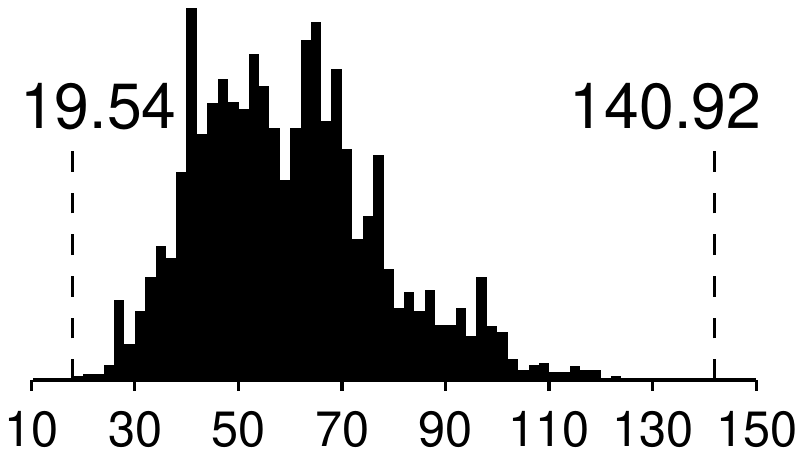}}
    \put(282,0){%
      \includegraphics[width=88pt, trim=201pt 297pt 185pt 311pt, clip]
      {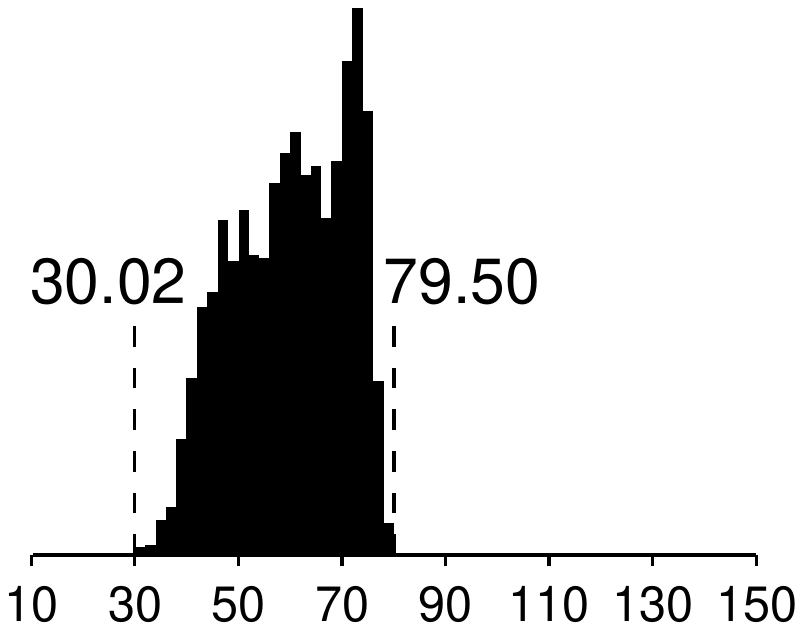}}
    \put(0, 39){%
      \includegraphics[width=135pt, trim=172pt 261pt 156pt 249pt, clip]%
      {twoholesdelref_1200/improvedBW}}
    \put(180, 39){%
      \includegraphics[width=135pt, trim=172pt 261pt 156pt 249pt, clip]%
      {twoholesdelref_1200/wctBW}}
    \put(0, 14){%
      \begin{minipage}{135pt}
        \centering
        {\small $p = 21.92\%$\\
        $n = 263$\\
        $\sigma = 18.141$}
      \end{minipage}}
    \put(180, 14){%
      \begin{minipage}{135pt}
        \centering
        {\small $p = 0.00\%$\\
        $n = 0$\\
        $\sigma = 10.973$}
      \end{minipage}}
  \end{picture}%
  \caption{This is a mesh with the same domain and same boundary
    vertices as the mesh in Fig.~\ref{fig:twoHolesFloat_1232}.  The
    $2$-well-centered mesh on the right was obtained from the
    mesh on the left in $7.44$ seconds by minimizing $E_{8}$ for
    $100$ iterations.  The high-quality result shows that the
    difficulty of getting a $2$-well-centered mesh in
    Fig.~\ref{fig:twoHolesFloat_1232} is not due solely
    to the domain or the boundary vertices.  The initial
    mesh for this experiment was generated using the freely
    available software Triangle \cite{Shewchuk1996} and heuristics
    for improving the mesh connectivity.}
\label{fig:twoholesdelref_1200}
\end{figure}

\subsection{A Graded Mesh}
The two holes mesh of Fig.~\ref{fig:twoHolesFloat_1232} and the mesh
in Fig.~\ref{fig:titan2D} related to the titan rocket are both graded
meshes. However, the gradation of those meshes was controlled partly
by the size of elements on the boundary and by the geometry of the
mesh. In Fig.~\ref{fig:grdsquare_966} we show the results of applying
energy minimization to a mesh of the square with an artificially
induced gradation. The initial mesh and angle histogram appear at left
in Fig.~\ref{fig:grdsquare_966}.  The nearly converged result produced
by 30 iterations minimizing $E_4$ is displayed to its right. 

The initial size of the triangles of a mesh is not always preserved
well when optimizing the energy.  We expect, however, that
the energy will generally preserve the
grading of an input mesh if the initial mesh is
relatively high quality.  This hypothesis stems from the observation
that the energy is independent of triangle size, the idea that the
mesh connectivity combined with the property of $2$-well-centeredness
somehow controls the triangle size, and the supporting evidence of
this particular experiment. 

Thus optimizing graded meshes is a useful application of our
algorithm;  there are no known provably correct
algorithms for creation of graded acute-angled
triangulations of planar domains.  The recent algorithm of
\cite{ErUn2007} has produced graded acute triangulations in a
variety of experiments, but in all cases we have tried, we have
been able to improve the quality of
their triangulations (Section~\ref{subsec:lake}).  Moreover, their
algorithm is not known to generalize to higher dimensions.

\begin{figure}
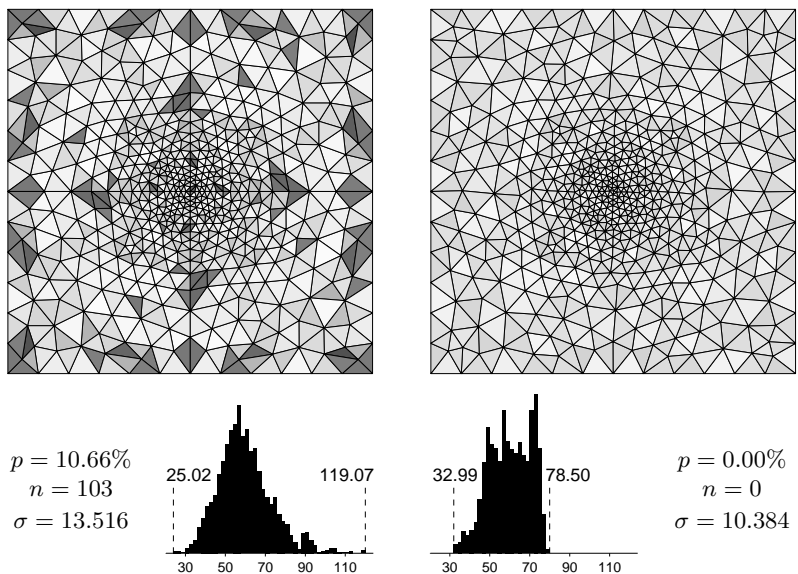

  \centering
  \begin{picture}(300, 215)
    \put(60,0){%
      \includegraphics[width=80pt, trim=205pt 296pt 196pt 311pt, clip]%
      {grdsquare_966/modifiedallang}}
    \put(160,0){%
      \includegraphics[width=80pt, trim=205pt 296pt 196pt 311pt, clip]%
      {grdsquare_966/f2E04_030allang}}
    \put(0, 30){%
      \begin{minipage}{50pt}
        \centering
        {\small $p = 10.66\%$\\
        $n = 103$\\
        $\sigma = 13.516$}
      \end{minipage}}
    \put(250, 30){%
      \begin{minipage}{50pt}
        \centering
        {\small $p = 0.00\%$\\
        $n = 0$\\
        $\sigma = 10.384$}
      \end{minipage}}
    \put(0,75){%
      \includegraphics[width=140pt, trim=132pt 220pt 113pt 206pt, clip]%
      {grdsquare_966/modifiedBW}}
    \put(160,75){%
      \includegraphics[width=140pt, trim=132pt 220pt 113pt 206pt, clip]%
      {grdsquare_966/f2E04_030BW}}
  \end{picture}
  \caption{For this graded mesh of the square, minimizing $E_{4}$ on
    the initial mesh (left) produces a $2$-well-centered mesh (right)
    that has grading similar to the initial mesh.  The optimization
    ran for $30$ iterations, completing in $2.16$ seconds.}
  \label{fig:grdsquare_966}
\end{figure}

\subsection{Mesh of Lake Superior} \label{subsec:lake}
The Lake Superior domain, with its complicated shape, has appeared
in many papers about quality meshing.  We include an example
optimizing a mesh of this well-known domain.  The initial mesh
is already $2$-well-centered in this experiment, but we show that
we can improve its quality with our optimization algorithm.  The
results are represented graphically in Fig.~\ref{fig:superior_1388}.

The initial acute-angled mesh is from the work of Erten and \"Ung\"or
\cite{ErUn2007} on generating acute $2$-D triangulations with a
variant of Delaunay refinement.  The initial mesh has a maximum angle
of $89.00$\textdegree\ with $174$ triangles having angles larger than
$88.00$\textdegree.  Directly optimizing $E_{10}$ on the initial mesh,
Mesquite finds a local minimum of $E_{10}$ after $6.63$ seconds ($21$
iterations).  The local minimum has exactly one nonacute triangle
(maximum angle $91.03$\textdegree) and only 40 triangles having angles
larger than $88.00$\textdegree.  The angle histogram for this result
is included in Fig.~\ref{fig:superior_1388} at top center.  The mesh
is visually very similar to the initial mesh and does not appear in
this paper.

If we start by optimizing $E_{4}$ and follow that by optimizing
$E_{10}$ we obtain a local (perhaps also global) minimum of $E_{10}$
with with much lower energy than the result obtained by directly
optimizing $E_{10}$.  The result of this optimization process is
shown on the right in Fig.~\ref{fig:superior_1388}.  The optimization
took $131.48$ seconds total; Mesquite spent $102.81$ seconds ($453$
iterations) finding a minimum of $E_{4}$ and $28.67$ seconds ($125$
iterations) finding a minimum of $E_{10}$.

Laplacian smoothing is a popular mesh optimization technique
that was first used for structured meshes with quadrilateral
elements and later generalized to triangle meshes~\cite{Winslow1964}.
A brief description of Laplacian smoothing is given
in~\cite{Field1988}.  We compare our mesh optimization
technique with Laplacian smoothing,
using the implementation of Laplacian smoothing provided by
the Mesquite library.  The result of Laplacian smoothing
on the Lake Superior mesh is shown in
Fig.~\ref{fig:superior_1388lplc}.  The optimization was
terminated after $100$ iterations, which is near convergence.
The run time was $1.31$ seconds.  
The maximum angle in the result is $109.27$\textdegree{}
and more than $4\%$ of the triangles are nonacute.

The result of optimizing the Lake Superior mesh with Laplacian
smoothing is typical of the results obtained
with Laplacian smoothing.  We performed
experiments with Laplacian smoothing on all of the
2-D meshes presented in this paper, and no mesh became
well-centered except for the mesh of the square
in Fig.~\ref{fig:grdsquare_966}, where Laplacian smoothing
produced a mesh with maximum angle $87.54$\textdegree{}
compared to the maximum angle of $78.50$\textdegree{} obtained
by our method.  In most cases the percentage of nonacute
triangles after Laplacian smoothing was between $1\%$ and
$5\%$, but for the meshes in
Figs.~\ref{fig:twoHolesFloat_1232},~\ref{fig:twoHolesFloat_1232_wct},
and~\ref{fig:twoHolesOrth_1232}, the percentage of nonacute
triangles was much higher, getting as high as
$48.70\%$ for the mesh in Fig.~\ref{fig:twoHolesOrth_1232}.
Clearly the traditional Laplacian smoothing is not an
appropriate tool for finding acute triangulations.

\begin{figure}
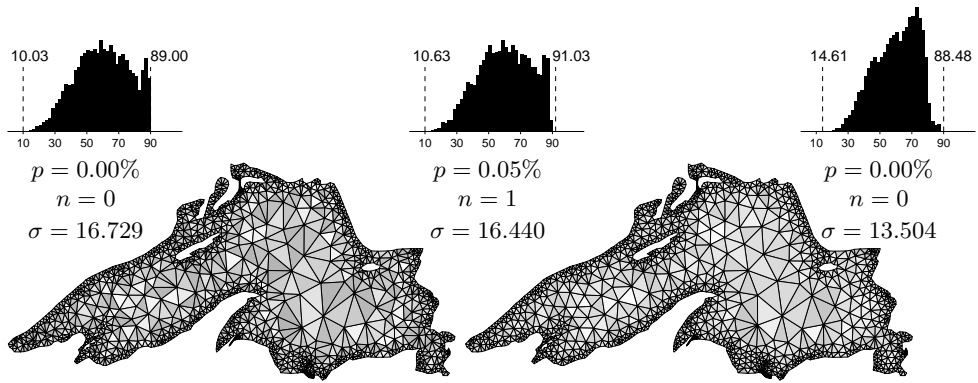

  \centering
  \begin{picture}(370,147)
    \put(0,88){%
      \includegraphics[width=70pt, trim=206pt 296pt 187pt 311pt, clip]%
      {superior_1388/initialallang}}
    \put(0,67){%
      \begin{minipage}{60pt}
        \centering
        {\small $p = 0.00\%$\\
        $n = 0$\\
        $\sigma = 16.729$}
      \end{minipage}}
    \put(152,88){%
      \includegraphics[width=70pt, trim=206pt 296pt 187pt 311pt, clip]%
      {superior_1388/E10_021allang}}
    \put(152,67){%
      \begin{minipage}{60pt}
        \centering
        {\small $p = 0.05\%$\\
        $n = 1$\\
        $\sigma = 16.440$}
      \end{minipage}}
    \put(300,88){%
      \includegraphics[width=70pt, trim=206pt 296pt 187pt 311pt, clip]%
      {superior_1388/mixoptallang}}
    \put(300,67){%
      \begin{minipage}{60pt}
        \centering
        {\small $p = 0.00\%$\\
        $n = 0$\\
        $\sigma = 13.504$}
      \end{minipage}}
    \put(0,0){%
      \includegraphics[width=170pt, trim=36pt 267pt 10pt 249pt, clip]%
      {superior_1388/initialBW}}
    \put(173,0){%
      \includegraphics[width=170pt, trim=36pt 267pt 10pt 249pt, clip]%
      {superior_1388/mixoptBW}}
  \end{picture}
  \caption{Result for a mesh of Lake Superior.  The initial mesh shown
    on the left is a $2$-well-centered mesh from \cite{ErUn2007}.  The
    improved mesh shown on the right was obtained by first optimizing
    $E_{4}$ and then optimizing $E_{10}$.  The angle histogram at top
    center shows the result of optimizing $E_{10}$ directly on the
    initial mesh.  Many of the angles that were near $90$\textdegree\
    have dropped to below $80$\textdegree.}
  \label{fig:superior_1388}  
\end{figure}

\begin{figure}
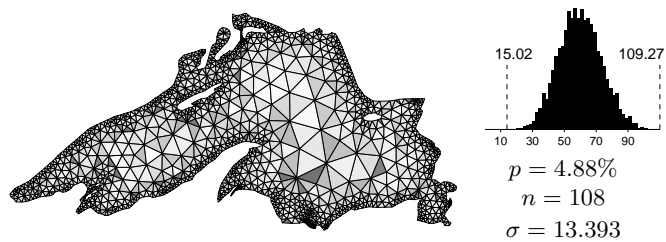

  \centering
  \begin{picture}(250,92) 
    \put(180,37){%
      \includegraphics[width=70pt, trim=206pt 296pt 187pt 311pt, clip]%
      {superior_1388/lplcnsmoothallang}}
    \put(180,3){%
      \begin{minipage}[b]{60pt}
        \centering
        {\small $p = 4.88\%$\\
        $n = 108$\\
        $\sigma = 13.393$}
      \end{minipage}}
    \put(0,5){%
      \includegraphics[width=170pt, trim=36pt 267pt 10pt 249pt, clip]%
      {superior_1388/lplcnsmoothBW}}
  \end{picture}
  \caption{Result of applying Laplacian smoothing to the initial
    acute mesh of Lake Superior (left side of
    Fig.~\ref{fig:superior_1388}).  More than $4\%$ of the triangles
    become nonacute, and the maximum angle increases
    to $109.27$\textdegree.}
  \label{fig:superior_1388lplc}  
\end{figure}

\subsection{Colombia, India, and Thailand} \label{subsec:geo}

We end our 2-D experimental results with a collection of
three large meshes of complicated geographical domains.
The experiments are summarized in \ref{fig:geomeshes}.
For each of these meshes the optimization started by minimizing
$\widetilde{E}_{8}$ for $500$ iterations and then proceeded
by minimizing $E_{8}$, running $500$ iterations at a time
until the mesh became well-centered.  After the mesh became
well-centered, we used one more round of $500$ iterations
minimizing $E_{8}$ to get some additional improvement in the
angle distribution.

\begin{figure}
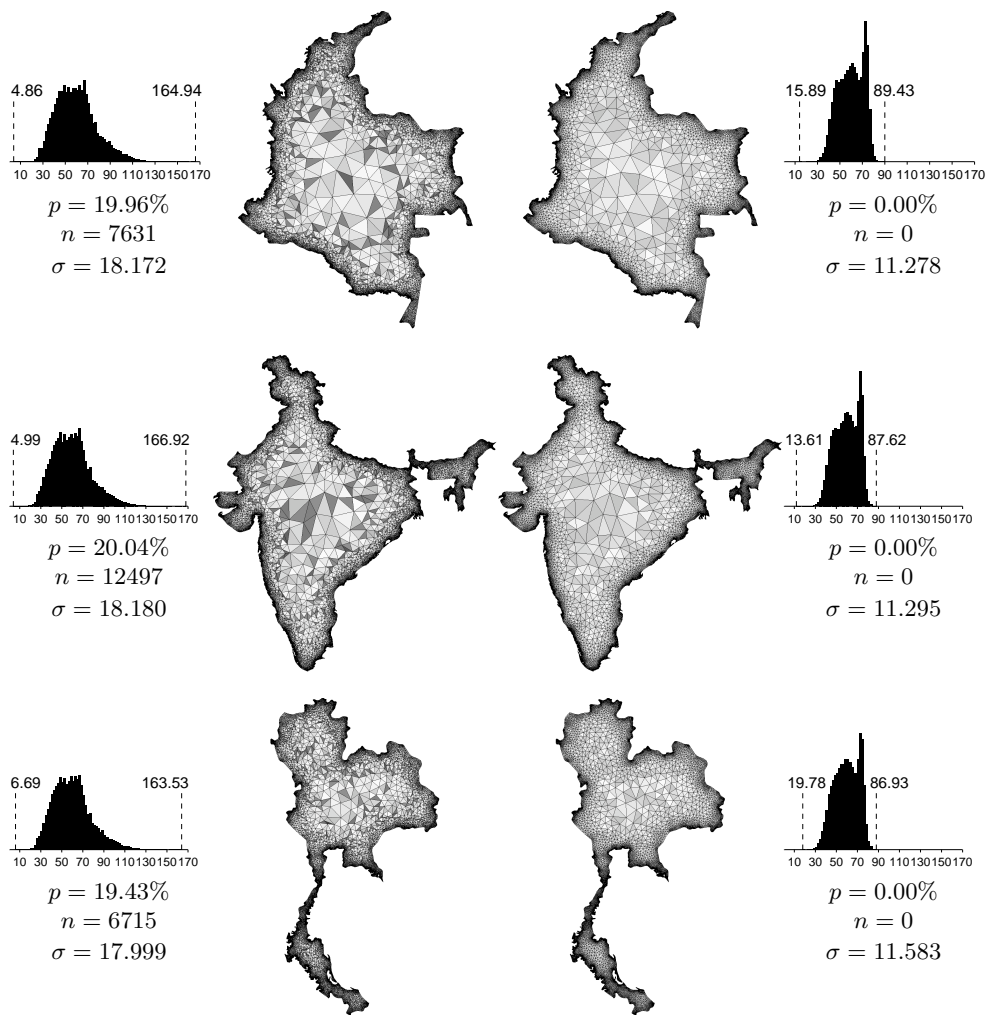

  \centering
  \begin{picture}(370,381)

    \put(0, 316){%
      \includegraphics[height=64pt, trim=206pt 297pt 186pt 312pt, clip]%
      {colombia_38233/improvedallang}}
    \put(0,304){%
      \begin{minipage}[t]{76pt}
        \centering
        {\small $p = 19.96\%$\\
        $n = 7631$\\
        $\sigma = 18.172$}
      \end{minipage}}
    \put(77,260){%
      \begin{minipage}[b]{108pt}
        \centering
        \includegraphics[height=120pt, trim=607pt 179pt 535pt 126pt, clip]%
        {colombia_38233/improvedBW.pdf}
      \end{minipage}}
    \put(185,260){%
      \begin{minipage}[b]{108pt}
        \centering
        \includegraphics[height=120pt, trim=607pt 179pt 535pt 126pt, clip]%
        {colombia_38233/wctBW.pdf}
      \end{minipage}}
    \put(293, 316){%
      \includegraphics[height=64pt, trim=206pt 297pt 186pt 312pt, clip]%
      {colombia_38233/wctallang}}
    \put(293,304){%
      \begin{minipage}[t]{76pt}
        \centering
        {\small $p = 0.00\%$\\
        $n = 0$\\
        $\sigma = 11.278$}
      \end{minipage}}
    \put(0,186){%
      \includegraphics[height=60pt, trim=206pt 297pt 186pt 312pt, clip]%
      {india_62370/improvedallang}}
    \put(0,174){%
      \begin{minipage}[t]{76pt}
        \centering
        {\small $p = 20.04\%$\\
        $n = 12497$\\
        $\sigma = 18.180$}
      \end{minipage}}
    \put(77,130){%
      \begin{minipage}[b]{108pt}
        \centering
        \includegraphics[height=120pt, trim=512pt 179pt 439pt 126pt, clip]%
        {india_62370/improvedBW.pdf}
      \end{minipage}}
    \put(185,130){%
      \begin{minipage}[b]{108pt}
        \centering
        \includegraphics[height=120pt, trim=512pt 179pt 439pt 126pt, clip]%
        {india_62370/wctBW.pdf}
      \end{minipage}}
    \put(293,186){%
      \includegraphics[height=60pt, trim=206pt 297pt 186pt 312pt, clip]%
      {india_62370/wctallang}}
    \put(293,174){%
      \begin{minipage}[t]{76pt}
        \centering
        {\small $p = 0.00\%$\\
        $n = 0$\\
        $\sigma = 11.295$}
      \end{minipage}}
    \put(0,56){%
      \includegraphics[height=60pt, trim=206pt 297pt 186pt 312pt, clip]%
      {thailand_34562/improvedallang}}
    \put(0,44){%
      \begin{minipage}[t]{76pt}
        \centering
        {\small $p = 19.43\%$\\
        $n = 6715$\\
        $\sigma = 17.999$}
      \end{minipage}}
    \put(77,0){%
      \begin{minipage}[b]{108pt}
        \centering
        \includegraphics[height=120pt, trim=733pt 179pt 660pt 126pt, clip]%
        {thailand_34562/improvedBW.pdf}
      \end{minipage}}
    \put(185,0){%
      \begin{minipage}[b]{108pt}
        \centering
        \includegraphics[height=120pt, trim=733pt 179pt 660pt 126pt, clip]%
        {thailand_34562/wctBW.pdf}
      \end{minipage}}
    \put(293,56){%
      \includegraphics[height=60pt, trim=206pt 297pt 186pt 312pt, clip]%
      {thailand_34562/wctallang}}
    \put(293,44){%
      \begin{minipage}[t]{76pt}
        \centering
        {\small $p = 0.00\%$\\
        $n = 0$\\
        $\sigma = 11.583$}
      \end{minipage}}
  \end{picture}
  \caption{These meshes of complicated geographical boundaries
    were optimized with an initial $500$ iterations of
    $\widetilde{E}_{8}$, followed by successive rounds of
    $500$ iterations minimizing the basic $E_{8}$.  The
    optimization produces well-centered meshes that preserve
    the grading of the input meshes.  (The dark regions
    near the boundaries of the meshes come from the agglomeration
    of the edges of triangles that are
    too small to be seen.)
    The data for the
    geographical boundaries was produced using the
    {\texttt{CountryData[]}} command of Mathematica.
    Initial meshes were constructed from the
    input polygons using Triangle
    \cite{Shewchuk1996} and heuristics
    for improving the mesh connectivity.}
  \label{fig:geomeshes}  
\end{figure}

The total number of iterations for the meshes was
$2000$ iterations for Colombia, $3500$ iterations for
India, and $3000$ iterations for Thailand, with total
optimization times of $5284.01$ seconds, $16162.20$ seconds,
and $8263.82$ seconds.  The meshes are quite large, with
$38233$ triangles, $62370$ triangles, and $34562$ triangles
respectively.  In each case, more than $19\%$ of the triangles
are nonacute in the initial mesh, and the maximum angle is
larger than $160$\textdegree, yet the optimization finds a
well-centered result.  It is also clear that the optimization
preserves the gradual change in element size from the tiny triangles
needed to resolve the boundaries to the much larger triangles
in the interiors of the meshes.


\subsection{3D Meshes} \label{subsec:3dresult}

For tetrahedral meshes, the question of whether the mesh connectivity
permits a well-centered mesh is more difficult than its
two-dimensional analogue \cite{VaHiGuRaZh2008}. In part because we do
not yet have an effective preprocessing algorithm for tetrahedral
meshes, many of our optimization experiments in three dimensions have
been limited to meshes with carefully designed mesh connectivity.
The mesh shown in Fig.~\ref{fig:cube_430} is one of these meshes.
The shading of the tetrahedral elements in Fig.~\ref{fig:cube_430}
represents the shadows that would result from viewing
the faceted object under a light source; it has nothing
to do with the quality of the elements of the mesh.  The full
mesh is a mesh of the three-dimensional cube with $430$ tetrahedra.
Figure~\ref{fig:cube_430} uses a cutaway view to display some
of the elements in the interior of the mesh.

Although the initial mesh was carefully designed to have good
mesh connectivity (e.g., each vertex has at least $10$ incident
edges) and a high-quality surface mesh, it was not $3$-well-centered.
In fact, $22.33\%$ of the tetrahedra are not $3$-well-centered.
Optimizing $E_{16}$ for $3.92$ seconds ($20$ iterations) produced
a $3$-well-centered mesh.  Even though the initial mesh was carefully
designed, the optimization result is nontrivial.  We compared
optimization of $E_{16}$ to the Mesquite implementation of Laplacian
smoothing, applying Laplacian smoothing to the initial mesh and
running it until it converged after $60$ iterations
($0.14$ seconds).  The result of Laplacian
smoothing is a mesh in which $22.33\%$ of the tetrahedra are not
$3$-well-centered.  Figure~\ref{fig:cube_430} includes
the $h(v, \sigma)/R(\sigma)$ distributions for the initial mesh,
the mesh after optimizing $E_{16}$, and the mesh resulting from
Laplacian smoothing.  Near each histogram we show the percentage $p$
and number $n$ of tetrahedra (not $h/R$ values) that are not
$3$-well-centered, and we report the mean $\mu$ and standard
deviation $\sigma$ of the distribution of $h/R$ values.

It is worth noting that, because of its
difficulty, obtaining well-centered
triangulations and/or acute triangulations of $3$-dimensional
objects is significant no matter how they are obtained.
In our other work we have made use of the optimization
techniques developed in this paper to
construct well-centered triangulations of several
simple three-dimensional shapes \cite{VaHiGu2008} and to
constructively prove the existence of an acute triangulation
of the $3$-dimensional cube \cite{VaHiZhGu2009}, solving
an open problem mentioned in \cite{EpSuUn2004} and
\cite{BrKoKrSo2009}.

\begin{figure}
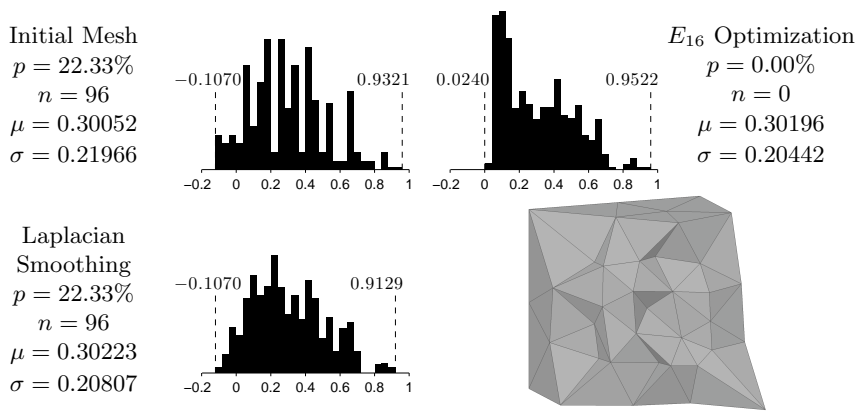

  \centering
  \begin{picture}(346, 154)
    \put(76, 39){\begin{minipage}[c]{90pt}
        \includegraphics%
        [width = 90pt, trim = 179pt 297pt 195pt 311pt, clip]%
        {cube_430/lplcnsmoothallhvr}
      \end{minipage}}
    \put(0, 36){\begin{minipage}[c]{76pt}
        \centering
        {\small Laplacian Smoothing\\
        $p = 22.33\%$\\
        $n = 96$\\
        $\mu = 0.30223$\\
        $\sigma = 0.20807$}
      \end{minipage}}
    \put(76, 116){\begin{minipage}[c]{90pt}
        \includegraphics%
        [width = 90pt, trim = 179pt 297pt 195pt 311pt, clip]%
        {cube_430/initialallhvr}
      \end{minipage}}
    \put(0, 117){\begin{minipage}[c]{76pt}
        \centering
        {\small Initial Mesh\\
        $p = 22.33\%$\\
        $n = 96$\\
        $\mu = 0.30052$\\
        $\sigma = 0.21966$}
      \end{minipage}}
    \put(210, 0){%
      \includegraphics[width= 90pt, trim= 417pt 246pt 344pt 120pt, clip]%
      {cube_430/E16_020BW.pdf}}
    \put(170, 116){\begin{minipage}[c]{90pt}
        \includegraphics%
        [width = 90pt, trim = 179pt 297pt 195pt 311pt, clip]%
        {cube_430/E16_020allhvr}
      \end{minipage}}
    \put(260, 117){\begin{minipage}[c]{76pt}
        \centering
        {\small $E_{16}$ Optimization\\
        $p = 0.00\%$\\
        $n = 0$\\
        $\mu = 0.30196$\\
        $\sigma = 0.20442$}
      \end{minipage}}
  \end{picture}
  \caption{A cutout view showing the interior of a $3$-well-centered
    mesh of the cube.  The mesh is the result of $3.92$ seconds 
    ($20$ iterations) of optimizing $E_{16}$ on an initial mesh
    for $22.33\%$ of the tetrahedra were not $3$-well-centered.
    Recall that a tetrahedron
    $\sigma^{3}$ is $3$-well-centered if and only if
    $h(v, \sigma^{3})/R(\sigma^{3}) > 0$ for each
    vertex $v$ of $\sigma^{3}$.
    For a regular tetrahedron, $h/R = 1/3$.
    The $h/R$ distributions for the initial mesh, the result of
    optimizing $E_{16}$, and the result of Laplacian smoothing show
    the superiority of our method for finding
    $3$-well-centered meshes.}
  \label{fig:cube_430}
\end{figure}

\section{Conclusions and Research Questions}
\label{sec:conclusions}



This paper shows that an $n$-well-centered simplex can be
characterized in terms of the equatorial balls of its facets and uses
this alternate characterization to prove that an $n$-well-centered
mesh in $\Real^{n}$ is a Delaunay mesh.  The paper introduces the
related cost functions $E_{\infty}$ and $E_{p}$ that quantify the
well-centeredness of triangulations in any dimension, extending the
function introduced in \cite{VaHiGuRa2007}.  Some properties of the
cost function are discussed, and it is shown that a cost function
quantifying well-centeredness must be nonconvex.

After introducing the cost function, the paper shows that the minmax
angle triangulation is the optimal triangulation with respect to the
$E_{\infty}$ energy and discusses why our algorithm uses the local
preprocessing algorithm of \cite{VaHiGuRa2007} instead of computing
the maxmin triangulation after each step of optimization.  The
discussion raises the interesting research question of how to
efficiently compute (and recompute)
a triangulation that minimizes the
maximum angle among triangulations with no lonely vertices.

The task
of developing a local preprocessing algorithm that works in
dimensions higher than $2$ is another important research objective.
A simple and complete characterization of the mesh connectivity
requirements for a vertex and its one-ring in a tetrahedral mesh in
$\Real^3$ to be $3$-well-centered would be helpful. We have made a
start for such a characterization in~\cite{VaHiGuRaZh2008}, where we
have discovered some beautiful connections to the triangulation of the
spherical link of the one ring.

The experiments of Section~\ref{sec:results} show that the proposed cost
function can be effective in finding a well-centered triangulation for
meshes that permit such triangulations.  The optimization problem in
the context of our nonconvex cost functions $E_{p}$ is a difficult
problem, though, and Mesquite does not always find a global minimum of
the energy. While it is easy to show that our gradient descent type
algorithm converges to a local stationary point, it would be nice to
have an optimization method guaranteed to find a global minimum of the
energy. This however is a very hard problem and typical of the
difficulties faced by other iterative algorithms for mesh
optimization. For example, for the vastly popular iterative algorithms
for centroidal Voronoi tessellations \cite{DuFaGu1999} and their
variations \cite{DuGuJu2003, DuWa2005}, restricted convergence results
have only recently started appearing
\cite{DuEmJu2006,EmJuRa2008}. Similarly, a convergence proof for
variational tetrahedral meshing \cite{AlCoYvDe2005} is known for only
one rings, although the algorithm is very useful in practice.

It would also be worthwhile to improve the efficiency of
our optimization.  In particular, it would be interesting
to study methods for localizing the energy and applying
optimization in only those specific areas where
it is needed. Besides possibly making the optimization
more efficient, localizing the energy would make it
easier to parallelize the algorithm.  The experiment
in Section~\ref{subsec:twoholes} that made the optimization easier
by repositioning the boundary vertices suggests that
using a constrained optimization with boundary
vertices free to move along the boundary could make the
optimization more effective.

It is also possible that the cost function could be improved.  Using a
linear combination of $E_{\text{imr}}$ with $E_{p}$ was effective for
the two holes mesh of Section~\ref{subsec:twoholes} and
the geographical meshes in Section~\ref{subsec:geo},
but the coefficients of the linear combination
were chosen quite arbitrarily, and there may be other, better ways to
prevent element inversion.  There were also some experiments which
needed to use $E_{p}$ with more than one parameter $p$ in order to
find a nice result.  Taking a linear combination of $E_{p}$ for
different powers of $p$ might be effective for those situations and
perhaps more generally.

Since the original submission of this
manuscript, the authors have become aware
that Sazanov et al. generated a $3$-well-centered mesh
of a spherical layer by repeating the near-boundary triangulation
of their mesh stitching approach without stitching to an ideal
mesh \cite{SaHaMoWe2007}.  Generalizing their construction
to more complicated 3-D domains is another interesting direction
for research.

To summarize the paper briefly,
our generalized characterization of well-cen\-tered\-ness
offers, for the first time, a direction in which planar acute
triangulations may be generalized. More complex three dimensional
experiments will have to await a better preprocessing and better
mathematical understanding of the topological obstructions to
well-centeredness.

We believe we have shown enough evidence in this and related
publications that one can produce simple three dimensional
well-centered tetrahedral meshes. In planar domains, it is already
possible to produce well-centered triangulations with or without holes
and gradations, for complex domains. It is
also possible to improve triangulations
that are already acute. Like many other
successful mesh optimization algorithms, a convergence theory for
well-centered meshing will be discovered eventually, we hope, either
by us or by other researchers. For further developments, we felt the
need to make available the evidence that well-centered meshes are now
possible for experiments, and that there is a useful characterization
theory for such meshes.

\section*{Acknowledgment}
We would like to thank Vadim Zharnitsky for useful discussions.
We also thank Hale Erten and Alper \"Ung\"or for providing us
with the Lake Superior mesh.  Lastly, we thank
the reviewers; their comments led to improvement of the paper.

\bibliographystyle{acmurldoi}
\bibliography{wct}

\end{document}